\newcommand{\vct}[1]{\bm{#1}}
\newcounter{motivation}
\newtheorem{theorem}{Theorem}
\newtheorem{lemma}{Lemma}
\theoremstyle{remark}
\newtheorem{remark}{Remark}
 \newcommand{\bF}{\mathbb{F}}
 \newcommand{\bV}{\mathbb{V}}
 \newcommand{\bU}{\mathbb{U}}
 \newcommand{\bR}{\mathbb{R}}
 \newcommand{\cS}{\mathcal{S}}
 \newcommand{\cT}{\mathcal{T}}
\DeclareMathOperator{\avg}{avg}
\DeclareMathOperator{\diag}{diag}
\newcommand{\Removed}[1]{}
\newcommand{\Added}[1]{{#1}}
\begin{document}

%

%

\sloppy






\title{Lagrange Coded Computing: Optimal Design for Resiliency, Security, and Privacy}

	\author{Qian~Yu$^{*}$, Songze Li$^{*}$, Netanel Raviv$^{\dagger}$, Seyed Mohammadreza Mousavi Kalan$^{*}$, Mahdi Soltanolkotabi$^{*}$, and A.~Salman~Avestimehr$^{*}$\\
		$^{*}$ Department of Electrical Engineering, University of Southern California, Los Angeles, CA, USA \\ 
		$^{\dagger}$ Department of Electrical Engineering, California Institute of Technology, Pasadena,  CA, USA\\
	}

\maketitle

\begin{abstract}
We consider a  scenario involving computations over a massive dataset stored distributedly across multiple workers, which is at the core of distributed learning algorithms. 
We propose \textit{Lagrange Coded Computing} (LCC), a new framework to \emph{simultaneously} provide (1) \textit{resiliency} against stragglers that may prolong computations; (2) \textit{security} against Byzantine (or malicious) workers that deliberately modify the computation for their benefit; and (3) (information-theoretic) \textit{privacy} of the dataset amidst possible collusion of workers. 
LCC, which leverages the well-known Lagrange polynomial to create computation redundancy in a novel coded form across workers, can be applied to any computation scenario in which the function of interest is an \textit{arbitrary multivariate polynomial} of the input dataset, hence covering many computations of interest in machine learning. 
LCC significantly generalizes prior works to go beyond linear computations. It also enables secure and private computing in distributed settings, improving the computation and communication efficiency of the state-of-the-art. 
Furthermore, we prove the optimality of LCC by showing that it achieves the optimal tradeoff between resiliency, security, and privacy, i.e., in terms of tolerating the maximum number of stragglers and adversaries, and providing data privacy against the maximum number of colluding workers.
Finally, we show via experiments on Amazon EC2 that LCC speeds up the conventional uncoded implementation of distributed least-squares linear regression by up to $13.43\times$, and also achieves a $2.36\times$-$12.65\times$ speedup over the state-of-the-art straggler mitigation strategies.
\end{abstract}




\section{Introduction}\label{section:introduction}
The massive size of modern datasets necessitates computational tasks to be performed in a distributed fashion, where the data is dispersed among many servers that operate in parallel~\cite{abadi2016tensorflow}. As we ``scale out'' computations across many servers, however, several fundamental challenges arise. Cheap commodity hardware tends to vary greatly in computation time, and it has been demonstrated~\cite{dean2013tail,LiMu,MultiTask} that a small fraction of servers, referred to as \textit{stragglers}, can be~$5$ to~$8$ times slower than the average, thus creating significant delays in computations.  Also, as we distribute computations across many servers, massive amounts data must be moved between them to execute the computational tasks, often over many iterations of a running algorithm, and this creates a substantial bandwidth bottleneck~\cite{li2014communication}. Distributed computing systems are also much more susceptible to adversarial servers, making security and privacy a major concern~\cite{oneAdvarsarialStraggler,CramerMPC,ShareMind}. 

We consider a general scenario in which the computation is carried out distributively across several workers, and propose \textit{Lagrange Coded Computing} (LCC), a new framework to simultaneously provide 

\begin{enumerate}[leftmargin=*] 
\item \textit{resiliency} against straggler workers that may prolong computations;

\item \textit{security} against Byzantine (or malicious, {adversarial}) workers, with no computational restriction, that deliberately send erroneous data in order to affect the computation for their benefit; and

\item \textit{(information-theoretic) privacy} of the dataset amidst possible collusion of workers.
\end{enumerate}

\begin{figure}
  \centering
    \includegraphics[width=0.48\textwidth]{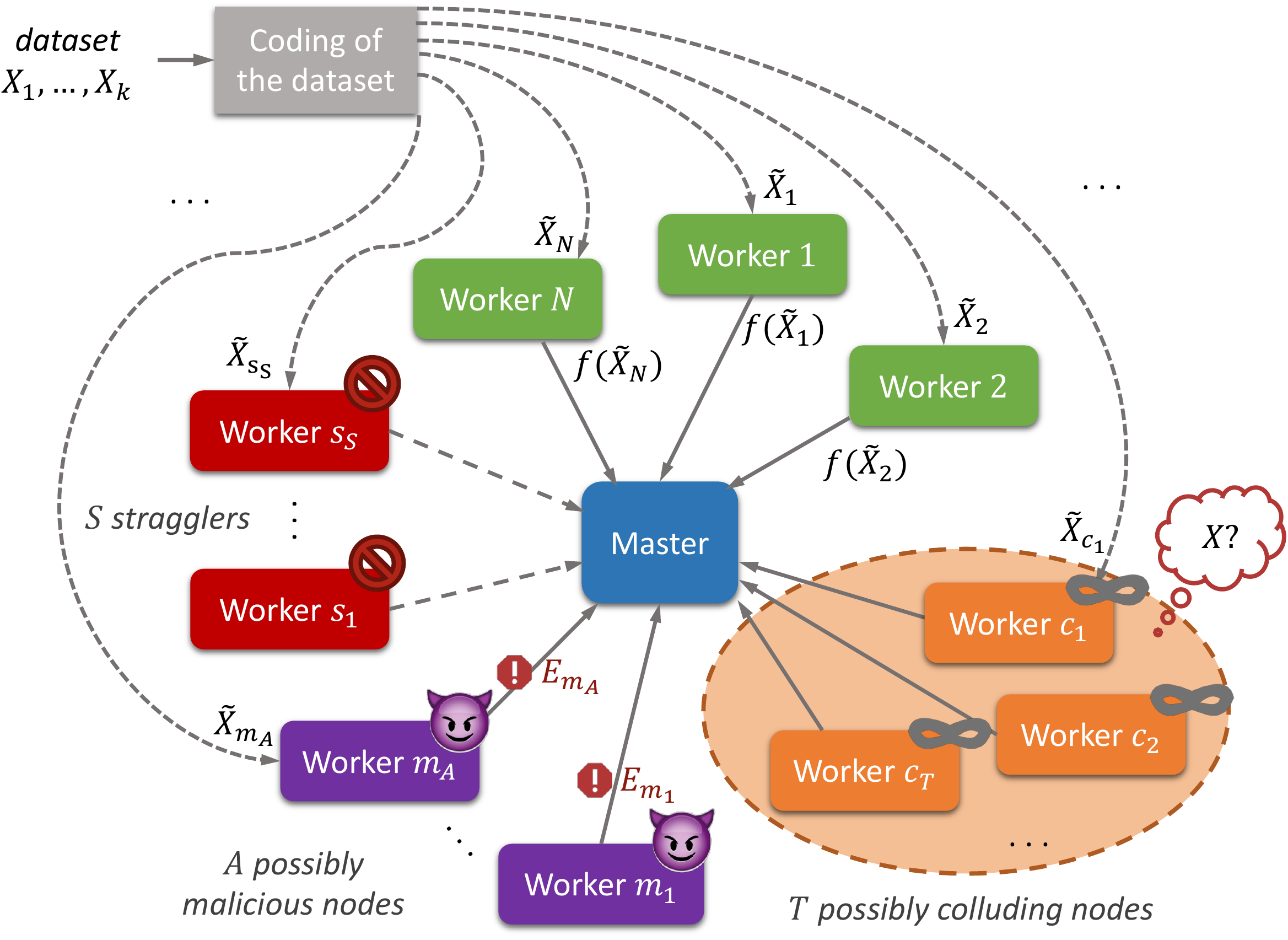}
 
    \caption{{\small An overview of the problem considered in this paper, where the goal is to evaluate a \textit{not necessarily linear} function~$f$ on a given dataset $X=(X_1,X_2,\ldots,X_K)$ using $N$ workers. Each worker applies~$f$ on a possibly \textit{coded} version of the inputs (denoted by $\tilde{X}_i$'s). By carefully designing the coding strategy, the master can decode all the required results from a subset of workers, in the presence of \textit{stragglers} (workers~$s_1,...,s_S$) and \textit{Byzantine workers} (workers~$m_1,...,m_A$), while keeping the dataset private to \textit{colluding workers} (workers~$c_1,...,c_T$).}}
  \label{fig:sys}

\end{figure}

LCC can be applied to any computation scenario in which the function of interest is an \textit{arbitrary multivariate polynomial} of the input dataset. This  covers many computations of interest in machine learning, such as various gradient and loss-function computations in learning algorithms and tensor algebraic operations (e.g., low-rank tensor approximation). The key idea of LCC is to encode the input dataset using the well-known Lagrange polynomial, in order to create computational redundancy in a novel coded form across the workers. This redundancy can then be exploited to provide resiliency to stragglers, security against malicious servers, and privacy of the dataset.


{Specifically, as illustrated in Fig.~\ref{fig:sys}, using a master-worker distributed computing architecture with $N$ workers, the goal is to compute~$f(X_i)$ for every~$X_i$ in a large dataset $X=(X_1,X_2,\ldots,X_K)$, where~$f$ is a given multivariate polynomial with degree $\deg f$. 
To do so,~$N$ coded versions of the input dataset, denoted by $\tilde{X}_1,\tilde{X}_2,\ldots,\tilde{X}_N$ are created, and the workers then compute~$f$ over the coded data, {\textit{as if no coding is taking place}}. 
For a given~$N$ and~$f$, we say that the tuple~$(S,A,T)$ is \textit{achievable} if there exists an encoding and decoding scheme that can complete the computations  in the presence of up to~$S$ stragglers, up to~$A$ adversarial workers, whilst keeping the dataset private against sets of up to~$T$ colluding workers. 

Our main result is that by carefully encoding the dataset the proposed LCC achieves~$(S,A,T)$ if~$(K+T-1)\deg f+S+2A+1 \leq N$. The significance of this result is that by  one additional worker (i.e., increasing $N$ by 1) LCC can increase the resiliency to stragglers by 1 or increase the robustness to malicious servers by $1/2$, while maintaining the privacy constraint. Hence, this result essentially extends the well-known optimal scaling of error-correcting codes (i.e., adding one parity can provide robustness against one erasure or $1/2$ error in optimal maximum distance separable codes) to the distributed secure computing paradigm.



We prove the optimality of LCC {by showing that it achieves the optimal tradeoff between resiliency, security, and privacy. In other words, any computing scheme (under certain complexity constrains on the encoding and decoding designs) can achieve $(S,A,T)$ if and only if $(K+T-1)\deg f+S+2A+1 \leq N$.\footnote{{More accurately, when $N<K\textup{deg}f-1 $, we prove that the optimal tradeoff is instead given by $ K(S+2A+\textup{deg}\ f  \cdot T + 1)\leq N$, which can be achieved by a variation of the LCC scheme, as described in Appendix \ref{app:ulcc}. }}
This result further extends the scaling law in coding theory to private computing, showing that any additional worker enables data privacy against $1/\textup{deg} f$ additional colluding workers. }


}

Finally, we specialize our general theoretical guarantees for LCC in the context of least-squares linear regression, which is one of the elemental learning tasks, and demonstrate its performance gain by optimally suppressing stragglers. Leveraging the algebraic structure of gradient computations, several strategies have been developed recently to exploit data and gradient coding for straggler mitigation in the training process  (see, e.g.,~\cite{speeding,TandonLDK17,maityrobust,karakus2017straggler,li2017near}). We implement LCC for regression on Amazon EC2 clusters, and empirically compare its performance with the conventional uncoded approaches, and two state-of-the-art straggler mitigation schemes: gradient coding (GC)~\cite{TandonLDK17,Halbawi,raviv2017gradient,ye2018communication} and matrix-vector multiplication (MVM) based approaches~\cite{speeding,maityrobust}. 
Our experiment results demonstrate that compared with the uncoded scheme, LCC improves the run-time by $6.79\times$-$13.43\times$. Compared with the GC scheme, LCC improves the run-time by $2.36\times$-$4.29\times$. Compared with the MVM scheme, LCC improves the run-time by $1.01\times$-$12.65\times$.



\textbf{Related works.}
There has recently been a surge of interest on using coding theoretic approaches to alleviate key bottlenecks (e.g., stragglers, bandwidth, and security) in distributed machine learning applications (e.g., \cite{lee2015speeding,LMA_all,yu2017optimally,Li2018fundamental,dutta2016short, NIPS2017_7027, TandonLDK17,Halbawi,raviv2017gradient,dutta2018on,MaddahAli,chen2018draco}). As we discuss in more detail in Section~\ref{sec:prior}, the proposed LCC scheme significantly advances prior works in this area by 1) generalizing coded computing to arbitrary multivariate polynomial computations, which are of particular importance in learning applications; 2) extending the application of coded computing to secure and private computing; 3) reducing the computation/communication load in distributed computing (and distributed learning) by factors that scale with the problem size, without compromising security and privacy guarantees; and 4) enabling $2.36\times$-$12.65\times$ speedup over the state-of-the-art in distributed least-squares linear regression in cloud networks. 

Secure \textit{multiparty computing} (MPC) and secure/private Machine Learning (e.g., \cite{ben1988completeness, 7958569}
) are also extensively studied topics that address a problem setting similar to LCC. As we elaborate in Section~\ref{sec:prior}, compared with conventional methods in this area (e.g., the celebrated BGW scheme for secure/private MPC~\cite{ben1988completeness}), LCC achieves substantial reduction in the amount of randomness, storage overhead, and computation complexity.

\section{Problem Formulation and Examples}\label{section:formulation}
We consider the problem of evaluating a multivariate polynomial~$f:\bV\to\bU$ over a dataset~$X=(X_1,\ldots,X_K)$,\footnote{We focus on the non-trivial case where $K>0$ and $f$ is not constant.} where~$\bV$ and~$\bU$ are vector spaces of dimensions~$M$ and~$L$, respectively, over the field~$\bF$.
We assume a distributed computing environment with a master and~$N$ workers (Figure~\ref{fig:sys}), in which the goal is to compute $Y_1\triangleq f(X_1), \ldots, Y_K\triangleq f(X_K)$. We denote the \textit{total degree}\footnote{The \textit{total degree} of a polynomial~$f$ is the maximum among all the total degrees of its monomials. When discussing finite~$\bF$, we resort to the canonical representation of polynomials, in which the individual degree within each term  is no more than $(|\mathbb{F}|-1)$.} of the polynomial~$f$ by $\deg f$.

In this setting each worker has already stored a fraction of the dataset prior to computation, in a possibly coded manner. {Specifically, for~$i\in [N]$ (where~$[N]\triangleq\{1,\ldots,N\}$),  worker~$i$ stores $\tilde{X}_i\triangleq g_i(X_1,\ldots,X_K)$, where~$g_i$ is a (possibly random) function, refered to as the encoding function of that worker.} We restrict our attention to linear encoding schemes\footnote{A formal definition is provided in Section \ref{sec:converses}.}, which guarantee low encoding complexity and simple implementation.

Each worker~$i \in[N]$ computes~$\tilde{Y}_i\triangleq f(\tilde{X}_i)$ and returns the result to the master. The master waits for {a subset of fastest workers} and then decodes $Y_1,\ldots,Y_K$. This procedure must satisfy several additional requirements: 
\begin{itemize}[leftmargin=*]
    \item  
    \textbf{Resiliency}, i.e., robustness against stragglers. Formally, the master must be able to obtain the correct values of~$Y_1,\ldots,Y_K$ even if up to~$S$ workers fail to respond {(or respond after the master executes the decoding algorithm)}, where~$S$ is the \textit{resiliency parameter} of the system. A scheme that guarantees resiliency against~$S$ stragglers is called \textit{$S$-resilient}. 

\item  
\textbf{Security}, i.e., robustness against adversaries. That is, the master must be able to obtain correct values of~$Y_1,\ldots,Y_K$ even if up to~$A$ workers return arbitrarily erroneous results, where~$A$ is the \textit{security parameter} of the system. A scheme that guarantees security against~$A$ adversaries is called \textit{$A$-secure}.  

\item {\textbf{Privacy}, i.e.,} the workers must remain oblivious to the content of the dataset, even if up to~$T$ of them collude, where~$T$ is the \textit{privacy parameter} of the system. Formally, for every~$\cT\subseteq [N]$ of size at most~$T$, we must have $I(X ; \tilde{X}_\cT)=0$,  
where~$I$ is mutual information, $\tilde{X}_\cT$ represents the collection of the encoded dataset stored at the workers in~$\cT$, and~$X$ is seen as chosen uniformly at random.\footnote{Equivalently, it requires that $\tilde{X}_\cT$ and $X$ are independent. Under this condition, the input data $X$ still appears uniformly random after the colluding workers learn $\tilde{X}_\cT$, which guarantees the privacy.} 
A scheme which guarantees privacy against~$T$ colluding workers is called \textit{$T$-private}. \footnote{To guarantee that the privacy requirement is well defined, we assume that $\bF$ and $\bV$ are finite whenever $T>0$.}
\end{itemize}

{More concretely, given any subset of workers that return the computing results (denoted by $\mathcal{K}$), the master computes $(\hat{Y}_1,...,\hat{Y}_K)=h_{\mathcal{K}}(\{\tilde{Y}_i\}_{i\in\mathcal{K}})$, where each $h_{\mathcal{K}}$ is a deterministic function (or is random but independent of both the encoding functions and input data). We refer to the $h_{\mathcal{K}}$'s as \emph{decoding functions}.}\footnote{{Similar to encoding, we also require the decoding function to have low complexity. When there is no adversary ($A=0$), we restrict our attention to \emph{linear decoding schemes}.}} 
{We say that a scheme is  $S$-resilient, $A$-secure, and $T$-private if the master always returns the correct results (i.e., each $Y_i=\hat{Y}_i$), and all above requirements are satisfied.  }

{Given the above framework, we aim to characterize the region for $(S,A,T)$, such that an $S$-resilient, $A$-secure, and $T$-private scheme can be found, given parameters $N$, $K$, and function $f$, for any sufficiently large field $\bF$.}

This framework encapsulates many computation tasks of interest, which we highlight as follows.

\textbf{Linear computation.}
Consider a scenario where the goal is to compute $A\vec{b}$ for some dataset $A=\{A_i\}_{i=1}^K$ and vector~$\vec{b}$, which naturally arises in many machine learning algorithms, such as each iteration of linear regression. Our formulation covers this by letting~$\bV$ be the space of matrices of certain dimensions over~$\bF$, $\bU$ be the space of vectors of a certain length over~$\bF$, $X_i$ be $A_i$, and~$f(X_i)=X_i\cdot \vec{b}$ for all~$i\in[K]$. Coded computing for such linear computations has also been studied in~\cite{dutta2016short,bitar2018minimizing,karakus2017straggler,speeding,wang2018fundamental}.

\textbf{Bilinear computation.}
Another computation task of interest is to evaluate element-wise products $\{A_i\cdot B_i\}_{i=1}^{K}$ of two lists of matrices
$\{A_i\}_{i=1}^K$ and~$\{B_i\}_{i=1}^K$. This is the key building block for various algorithms, such as 
fast distributed matrix multiplication~\cite{yu2018straggler}. Our formulation covers this by letting~$\bV$ be the space of pairs of two matrices of certain dimensions, $\bU$ be the space of matrices of dimension which equals that of the product of the pairs of matrices, $X_i=(A_i, B_i)$, and~$f(X_i)=A_i\cdot B_i$ for all~$i\in[K]$. 

 \textbf{General tensor algebra.}  
 Beyond bilinear operations, distributed computations of multivariate polynomials of larger degree, such as general tensor algebraic functions (i.e. functions composed of inner products, outer products, and tensor contractions) \cite{renteln2013manifolds}, also arise in practice. A specific example is to compute the coordinate transformation of a third-order tensor field at $K$ locations, where given a list of matrices $\{Q^{(i)}\}_{i=1}^K$ and a list of third-order tensors $\{T^{(i)}\}_{i=1}^K$ with matching dimension on each index, the goal is to compute another list of tensors, denoted by $\{T'^{(i)}\}_{i=1}^K$, of which each entry is defined as $T'^{(i)}_{j'k'\ell'}\triangleq \sum \limits_{j,k,\ell}T'^{(i)}_{jk\ell} Q^{(i)}_{jj'}Q^{(i)}_{kk'}Q^{(i)}_{\ell\ell'}$. Our formulation covers all functions within this class by letting~$\bV$ be the space of input tensors, $\bU$ be the space of output tensors, $X_i$ be the inputs, and $f$ be the tensor function.
 These computations are not studied by state-of-the-art coded computing frameworks. 

\textbf{Gradient computation.} Another general class of functions arises from gradient decent algorithms and their variants, which are the workhorse of today's learning tasks~\cite{shalev2014understanding}. 
The computation task for this class of functions is to consider one iteration of the gradient decent algorithm, and to evaluate the gradient of the empirical risk $\nabla L_\cS(h)\triangleq \avg_{z\in \cS} \nabla \ell_h(z)$, given a hypothesis~$h:\bR^d\to \bR$, a respective loss function~$\ell_h:\bR^{d+1}\to \bR$, and a training set~$\cS\subseteq \bR^{d+1}$, where~$d$ is the number of features.
In practice, this computation is carried out by partitioning~$\cS$ into~$K$ subsets~$\{\cS_i\}_{i=1}^K$ of equal sizes, evaluating the partial gradients~$\{\nabla L_{\cS_i}(h)\}_{i=1}^K$ distributedly, and computing the final result using~$\nabla L_\cS(h)=\avg_{i\in[K]}\nabla L_{\cS_i}(h)$. 
We present a specific example of applying this computing model to least-squares regression problems in Section~\ref{sec:regression}.  

\section{Main Results and Prior Works}


    

We now state our main results and discuss their connections with prior works. Our first theorem characterizes the region for $(S,A,T)$ that LCC achieves (i.e., the set of all feasible $S$-resilient, $A$-secure, and $T$-private schemes via LCC as defined in the previos section).

\begin{theorem}\label{thm:lcc}
    Given a number of workers~$N$ and a dataset~$X=(X_1,\ldots,X_K)$, 
    LCC 
    provides an~$S$-resilient, $A$-secure, and~$T$-private scheme for computing~$\{f(X_i)\}_{i=1}^K$ for any polynomial $f$, as long as
  \begin{align}\label{eq:lccThm}
      (K+T-1)\deg f+S+2A+1\leq N.
  \end{align}  
\end{theorem}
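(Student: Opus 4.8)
The plan is to build the LCC scheme explicitly via Lagrange interpolation, reduce the decoding problem to Reed--Solomon error correction for resiliency and security, and obtain privacy from the non-singularity of a Cauchy submatrix. First I would fix $K+T$ distinct elements $\beta_1,\ldots,\beta_{K+T}\in\bF$ and, drawing i.i.d.\ uniform masks $Z_{K+1},\ldots,Z_{K+T}\in\bV$, define the unique interpolation polynomial
\[
u(z)\triangleq\sum_{j\in[K]}X_j\!\!\prod_{k\in[K+T]\setminus\{j\}}\!\!\frac{z-\beta_k}{\beta_j-\beta_k}+\sum_{j=K+1}^{K+T}Z_j\!\!\prod_{k\in[K+T]\setminus\{j\}}\!\!\frac{z-\beta_k}{\beta_j-\beta_k},
\]
so that $u(\beta_j)=X_j$ for $j\in[K]$, $u(\beta_j)=Z_j$ for $j>K$, and $\deg u\le K+T-1$. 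Choosing $N$ further points $\alpha_1,\ldots,\alpha_N\in\bF$ that are distinct and disjoint from $\{\beta_j\}_{j\in[K+T]}$ (which is what forces $\bF$ to be large enough), I set the encoding to $\tilde X_i\triangleq u(\alpha_i)$; this is linear in $(X,Z)$, as the framework requires.

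The key observation is that $\tilde Y_i=f(\tilde X_i)=(f\circ u)(\alpha_i)$, and since $u$ has degree at most $K+T-1$ while $f$ has total degree $\deg f$, each coordinate of the composition $f\circ u$ is a univariate polynomial of degree at most $(K+T-1)\deg f$. Thus $(\tilde Y_1,\ldots,\tilde Y_N)$ is (coordinate-wise) a Reed--Solomon codeword evaluated at $N$ distinct points, while the desired outputs are exactly $Y_j=(f\circ u)(\beta_j)$ for $j\in[K]$. Decoding then reduces to recovering this polynomial from the $N-S$ received symbols, up to $A$ of which are adversarially corrupted. A Reed--Solomon code of this degree has minimum distance $N-(K+T-1)\deg f$, so unique decoding (e.g.\ via Berlekamp--Welch) succeeds provided $S+2A\le N-(K+T-1)\deg f-1$, which is precisely~\eqref{eq:lccThm}; evaluating the recovered $f\circ u$ at $\beta_1,\ldots,\beta_K$ yields the correct results, establishing $S$-resiliency and $A$-security.

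For $T$-privacy I would fix any $\cT\subseteq[N]$ with $|\cT|=T$ and write $\tilde X_\cT=\{u(\alpha_i)\}_{i\in\cT}$ as an affine function of the masks $Z$ for fixed data $X$; collecting the coefficients of $Z_{K+1},\ldots,Z_{K+T}$ gives a $T\times T$ matrix $\mathbf G_Z$ with entries $\ell_{K+m}(\alpha_i)$, the Lagrange coefficients. It suffices to show $\mathbf G_Z$ is invertible, since then $\mathbf G_Z Z$ is uniform whenever $Z$ is, making $\tilde X_\cT$ independent of $X$ and giving $I(X;\tilde X_\cT)=0$. Factoring the nonzero constants out of each column and the common factor $\prod_{k\in[K+T]}(\alpha_i-\beta_k)$ out of each row exhibits $\mathbf G_Z$ as a product of two nonsingular diagonal matrices with the Cauchy matrix $\bigl(\tfrac{1}{\alpha_i-\beta_{K+m}}\bigr)_{i\in\cT,\,m\in[T]}$, which is nonsingular precisely because the $\alpha_i$ are distinct, the $\beta_{K+m}$ are distinct, and the two sets are disjoint. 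I expect this Cauchy/MDS-type non-singularity (together with tracking the field-size it demands) to be the main obstacle: the resiliency and security half is a routine reduction to Reed--Solomon decoding, whereas verifying that the masking part of the code is genuinely uniform on every $T$-subset is where the real structural content lies.
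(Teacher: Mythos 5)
Your proposal is correct and follows essentially the same route as the paper: the identical Lagrange encoding, the reduction of resiliency and security to Reed--Solomon unique decoding of the degree-$(K+T-1)\deg f$ composition $f\circ u$, and privacy via invertibility of the $T\times T$ masking submatrix for every colluding set. The only cosmetic difference is that you establish that invertibility by factoring out diagonals to expose a Cauchy matrix (which costs you the mildly stronger requirement that the $\alpha_i$ avoid all $K+T$ of the $\beta_j$), whereas the paper's Lemma~\ref{lemma:Uproperties} performs the same diagonal factorization but identifies the core as a Lagrange-evaluation MDS matrix; both are standard and equivalent here.
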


\begin{remark}
To prove Theorem \ref{thm:lcc}, we formally present LCC in Section \ref{section:Lagrange}, which achieves the stated resiliency, security, and privacy. The key idea  is to encode the input dataset using the well-known Lagrange polynomial.
    In particular, encoding functions (i.e.,~$g_i$'s) in LCC amount to evaluations of a Lagrange polynomial of degree~$K-1$ at~$N$ distinct points. Hence, computations at the workers amount to evaluations of a \textit{composition} of that polynomial with the desired function~$f$.
    Therefore, inequality~(\ref{eq:lccThm}) may simply be seen as the number of evaluations that are necessary and sufficient in order to interpolate the composed polynomial, which is later evaluated at a certain point to finalize the computation. LCC also has a number of additional properties of interest. First, the proposed encoding is \textit{identical} for all computations $f$, which allows pre-encoding of the data without knowing the identity of the computing task (i.e., universality). Second, decoding and encoding rely on polynomial interpolation and evaluation, and hence efficient off-the-shelf subroutines can be used.\footnote{{A more detailed discussion on the coding complexities of LCC can be found in Appendix \ref{app:comp}. }}
\end{remark}

\begin{remark}
Besides the coding approach presented to achieve Theorem \ref{thm:lcc}, a variation of LCC can be used to achieve any $(S,A,T)$ as long as $ K(S+2A+\textup{deg}\ f  \cdot T + 1)\leq N$. This scheme (presented in Appendix \ref{app:ulcc}) achieves an improved region when $N<K\textup{deg}f-1$ and $T=0$, where it recovers the \emph{uncoded repetition} scheme. For brevity, we refer the better of these two scheme as LCC when presenting optimality results (i.e., Theorem \ref{thm:opt}). 
\end{remark}

\begin{remark}
    Note that LHS of inequality~(\ref{eq:lccThm}) is independent of the number of workers~$N$, hence the key property of LCC is that adding $1$ worker can increase its resilience to stragglers by 1 or its security to malicious servers by $1/2$, while keeping the privacy constraint $T$ the same. Note that using an uncoded replication based approach, to increase the resiliency to stragglers by 1, one needs to essentially repeat \emph{each} computation once more (i.e., requiring $K$ more machines as opposed to $1$ machine in LCC). This result essentially extends the well-known optimal scaling of error-correcting codes (i.e., adding one parity can provide robustness against one erasure or~$1/2$ error in optimal maximum distance separable codes) to the distributed computing paradigm. 
\end{remark}

Our next theorem demonstrates the optimality of LCC. 
\begin{theorem}\label{thm:opt}
LCC achieves the optimal trade-off between resiliency, security, and privacy (i.e., achieving the largest region of (S,A,T)) for any multilinear function f among all computing schemes that uses linear encoding, for all problem scenarios. 
Moreover, when focusing on the case where no security constraint is imposed, LCC is optimal for any polynomial $f$ among all schemes with additional constraints of linear decoding and sufficiently large (or zero) characteristic of $\bF$.
\end{theorem}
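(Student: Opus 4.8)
The plan is to establish the matching \emph{converse}: any linear-encoding scheme that is $S$-resilient, $A$-secure, and $T$-private must satisfy $(K+T-1)\deg f + S + 2A + 1 \le N$ (or the alternate footnote bound when $N$ is small), so that the achievability of Theorem~\ref{thm:lcc} is tight. First I would recast the three requirements in coding-theoretic language. Writing the linear encoding as $\tilde X_i = \sum_{j=1}^K G_{ij} X_j + \sum_{k=1}^T G'_{ik} R_k$, with the $R_k$ the masking keys, the response vector $(f(\tilde X_1),\dots,f(\tilde X_N))$ plays the role of a codeword. The $S$ and $2A$ terms then follow from the standard erasure-and-error distance argument: correct decoding under $S$ stragglers and $A$ adversaries forces any two datasets with \emph{distinct} output tuples $(f(X_1),\dots,f(X_K))$ to induce response vectors at Hamming distance at least $S+2A+1$. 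Consequently everything reduces to lower-bounding the \emph{recovery threshold} --- the smallest $m$ such that the outputs are determined by any $m$ correct responses --- and the goal becomes to show this threshold is at least $(K+T-1)\deg f + 1$.

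The heart of the argument, and the step I expect to be the main obstacle, is this recovery-threshold lower bound for a multilinear $f$. I would prove it by a \emph{confusability} construction: given any set $\mathcal{K}$ of workers with $|\mathcal{K}| \le (K+T-1)\deg f$, exhibit two choices of data-plus-keys that agree in the response of every worker in $\mathcal{K}$ yet produce different outputs. The mechanism is a reduction to univariate interpolation that exploits multilinearity: placing the $K+T$ virtual inputs along a structured (Lagrange-type) curve forces each worker's response to become the evaluation, at a worker-specific point, of a single univariate polynomial of degree $(K+T-1)\deg f$, while the outputs are prescribed evaluations of the same object --- mirroring the achievability. Since $|\mathcal{K}|$ is strictly below the number $(K+T-1)\deg f + 1$ of coefficients, the evaluations under-determine the polynomial, and two interpolants agreeing on $\mathcal{K}$ but disagreeing at the output locations can be produced. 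Two difficulties must be handled: (i) arbitrary (possibly \emph{nonlinear}) decoding is allowed here, so I cannot merely argue about spans of coefficient functionals and must instead ensure the two confusable inputs have genuinely distinct output tuples as an information-theoretic statement; and (ii) privacy must be shown to consume exactly $\deg f$ recovery slots per colluding worker --- I would argue that $T$-privacy forces the key-encoding to be full rank on every set of $T$ workers, so the masking randomness behaves as $T$ additional genuine data inputs, effectively replacing $K$ by $K+T$ throughout the count.

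For the second statement (no security, $A=0$, general polynomial $f$, \emph{linear} decoding, and large or zero characteristic), I would use the extra linearity of decoding to convert decodability into a clean rank condition. The outputs $f(X_j)$ and responses $f(\tilde X_i)$ are degree-$\deg f$ polynomials in the entries of $X$ and the keys, and linear decoding means each output functional lies in the $\bF$-span of the response functionals; expanding in monomials yields a Veronese-type coefficient matrix, and the bound follows by showing this matrix cannot attain the required rank unless $N \ge (K+T-1)\deg f + 1$. The characteristic hypothesis enters precisely to prevent monomial degeneracies: in small positive characteristic high-degree monomials can collapse as functions, corrupting the dimension count, whereas over large or zero characteristic the relevant degree-$\deg f$ monomials remain linearly independent and the count is exact. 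Finally, I would note that optimality is asserted against the \emph{better} of the two achievable regions, so the converse must match the maximum of the two bounds; the regime $N < K\deg f - 1$ corresponds to uncoded repetition being optimal and is handled by the same confusability construction specialized to that range.
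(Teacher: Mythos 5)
Your skeleton matches the paper's: reduce the $2A$ term to stragglers via a generalized Hamming-distance argument, lower-bound the recovery threshold for multilinear $f$, account for privacy as extra effective inputs, and treat general polynomials separately under linear decoding. The gap is in the central step. Your confusability construction assumes that, after placing the $K+T$ virtual inputs on a Lagrange-type curve $u$, each worker's response becomes an evaluation of the single univariate polynomial $f\circ u$ at a worker-specific point, so that $|\mathcal{K}|\le (K+T-1)\deg f$ under-determines it. That is true for the LCC encoding, but the converse must hold for an \emph{arbitrary} linear encoding $\tilde X_i=\sum_j G_{ij}X_j+\cdots$: restricting $X_j=u(\beta_j)$ makes $\tilde X_i$ a linear combination of several evaluations of $u$, not $u(\alpha_i)$, so $f(\tilde X_i)$ is not an evaluation of $f\circ u$ at any point and the interpolation count collapses. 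The paper avoids this with an induction on the degree (Lemma~\ref{lemma:rec}): for a degree-$(d_0+1)$ multilinear $f$ it selects $K$ workers whose encoding columns form a basis of $\bF^K$, chooses the last input coordinate so that those workers output $0$ by multilinearity, and thereby manufactures a valid scheme for a degree-$d_0$ multilinear function on $N-K$ workers with recovery threshold $R-(K-1)$; the base case $d=1$ is your null-space argument. This ``trade $K$ workers for one degree'' mechanism is what you are missing.

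Two further steps are asserted rather than proved. First, the claim that $T$-privacy makes the masking keys ``behave as $T$ additional genuine data inputs'' is exactly what needs proof: the keys are not outputs the master must recover, so they do not automatically cost $\deg f$ recovery slots each; the paper devotes a separate and fairly intricate induction (Lemma~\ref{lemma:strong}, over a triple $(d,T,r)$, with a partition-and-averaging argument over classes of proportional key encodings) to establish this. Second, for the general-polynomial part with linear decoding, the Veronese rank count by itself only forces the span of the used responses to contain the $K$ output functionals, i.e.\ $|\mathcal{K}|\ge K$; getting $(K+T-1)\deg f+1$ requires exploiting that each response functional is $f$ of a linear form (a point on a Veronese cone), which is a Waring-type statement you have not supplied. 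The paper instead reduces to the multilinear case via the polarization identity $f'(Z_1,\dots,Z_d)=\sum_{\mathcal{S}\subseteq[d]}(-1)^{|\mathcal{S}|}f\bigl(\sum_{j\in\mathcal{S}}Z_j\bigr)$, which is multilinear and, when the characteristic of $\bF$ is $0$ or exceeds $\deg f$, nonzero because its diagonal equals $(-1)^d d!$ times the top-degree part of $f$; linearity of both encoding and decoding transfers any scheme for $f$ to one for $f'$, and Lemma~\ref{lemma:rec} then applies. Your intuition about why the characteristic matters (monomial degeneracies) is adjacent to, but not the same as, the actual role it plays ($d!\ne 0$).
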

{\begin{remark}
Theorem \ref{thm:opt} is proved in Section~\ref{sec:converses}. The main proof idea is to show that any computing strategy that outperforms LCC 
would violate the decodability requirement, by finding two instances of the computation process where the same intermediate computing results correspond to different output values. 
\end{remark}}

\begin{remark}
 In addition to the result we show in Theorem \ref{thm:opt}, we can also prove that LCC achieves optimality in terms of the amount of randomness used in data encoding. Specifically, we show in Appendix \ref{app:pt_rand} that LCC requires injecting the minimum amount of randomness, among all computing schemes that universally achieve the same resiliency-security-privacy tradeoff for all linear functions $f$.
\end{remark}


We conclude this section by discussing several lines of related work in the literature and contrasting them with LCC.


\subsection{LCC vs. Prior Works}\label{sec:prior}
The study of coding theoretic techniques for accelerating large scale distributed tasks (a.k.a. \emph{coded computing}) was initiated in {\cite{lee2015speeding,LMA_all,Li2018fundamental}}. Following works focused largely on matrix-vector and matrix-matrix multiplication (e.g.,~\cite{dutta2016short, NIPS2017_7027, dutta2018on, yu2018straggler}), gradient computation in gradient descent algorithms (e.g.,~\cite{TandonLDK17,raviv2017gradient,li2017near}), communication reduction via coding (e.g.,~\cite{li2017coded,ezzeldin2017communication,graphlong,2018arXiv180203049K}), and secure and private computing (e.g.,~\cite{MaddahAli,chen2018draco}).

LCC recovers several previously studied results as special cases. For example, setting~$f$ to be the identity function and $\bV=\bU$  reduces to the well-studied case of \textit{distributed storage}, in which Theorem~\ref{thm:lcc} is well known (e.g., the Singleton bound~\cite[Thm.~4.1]{roth2006introduction}). Further, as previously mentioned, $f$ can correspond to matrix-vector and matrix-matrix multiplication, in which the special cases of Theorem~\ref{thm:lcc} are known as well~\cite{speeding,yu2018straggler}.

More importantly, LCC improves and generalizes these works on coded computing in a few aspects: 
\textit{Generality}--LCC significantly generalizes prior works to go beyond linear and bilinear computations that have so far been the main focus in this area, and can be applied to arbitrary multivariate polynomial computations that arise in machine learning applications. In fact, many specific computations considered in the past can be seen as special cases of polynomial computation. This includes matrix-vector multiplication, matrix-matrix multiplication, and gradient computation whenever the loss function at hand is a polynomial, or is approximated by one. \textit{Universality}--once the data has been coded, any polynomial up to a certain degree can be computed distributedly via LCC. In other words, data encoding of LCC can be \emph{universally} used for any polynomial computation. This is in stark contrast to previous task specific coding techniques in the literature. Furthermore, workers apply the same computation as if no coding took place; a feature that reduces computational costs, and prevents ordinary servers from carrying the burden of outliers.
\textit{Security and Privacy}--other than a handful of works discussed above, straggler mitigation (i.e., resiliency) has been the primary focus of the coded computing literature. This work extends the application of coded computing to secure and private computing for general polynomial computations.


Providing security and privacy for \textit{multiparty computing} (MPC) and Machine Learning systems is an extensively studied topic which addresses a problem setting similar to LCC.  To illustrate the significant role of LCC in secure and private computing, let us consider the celebrated BGW MPC scheme~\cite{ben1988completeness}. \footnote{Conventionally, the BGW scheme operates in a multi-round fashion, requiring significantly more communication overhead than one-shot approaches. For simplicity of comparison, we present a modified one-shot version of BGW. }

Given inputs~$\{X_i\}_{i=1}^K$, BGW first uses Shamir's scheme \cite{Shamir:1979:SS:359168.359176} to encode the dataset in a privacy-preserving manner as~$P_{i}(z) = X_{i}+Z_{i,1}z+\ldots+Z_{i,T}z^T$ for every~$i\in[K]$, where $Z_{i,j}$'s are i.i.d uniformly random variables and $T$ is the number of colluding workers that should be tolerated. The key distinction between the data encoding of BGW scheme and LCC is that we instead use Lagrange polynomials to encode the data. This results in significant reduction in the amount of randomness needed in data encoding (BGW needs $KT$ $z_{i,j}$'s while as we describe in the next section, LCC only needs $T$ amount of randomness).

The BGW scheme will then store~$\{P_{i}(\alpha_\ell)\}_{i\in [K]}$ to worker~$\ell$ for every~$\ell\in[N]$, given some distinct values $\alpha_1,\ldots,\alpha_N$. The computation is then carried out by evaluating $f$ over \emph{all} stored coded data at the nodes.
In the LCC scheme, on the other hand, each worker $\ell$ only needs to store \emph{one} encoded data ($\tilde{X}_{\ell}$) and  compute $f(\tilde{X}_{\ell})$. This gives rise to the second key advantage of LCC, which is a factor of $K$ in storage overhead and computation complexity at each worker.

After computation, each worker~$\ell$ in the BGW scheme 
{has essentially evaluated the polynomials $\{f(P_{i}(z))\}_{i=1}^K$} at $z=\alpha_\ell$, whose degree is at most~$\deg(f)\cdot T$. Hence, if no straggler or adversary appears (i.e, $S=A=0$), the master can recover all required results $f(P_{i}(0))$'s, through polynomial interpolation, as long as $N\ge \deg(f)\cdot T+1$ workers participated in the computation\footnote{It is also possible to use the conventional multi-round BGW, which only requires $N\geq 2T+1$ workers to ensure $T$-privacy. However, multiple rounds of computation and communication ($\Omega(\log$ deg$(f) )$ rounds) are needed, which further increases its communication overhead.}. Note that under the same condition, LCC scheme requires  $N\ge \deg(f)\cdot (K+T-1)+1$  number of workers, which is larger than that of the BGW scheme. 

Hence, in overall comparison with the BGW scheme, LCC results in a factor of $K$ reduction in the amount of randomness, storage overhead, and computation complexity, while requiring more workers to guarantee the same level of privacy. This is summarized in Table~\ref{table:BGW}.\footnote{A BGW scheme was also proposed in \cite{ben1988completeness} for secure MPC, however for a substantially different setting. Similarly, a comparison can be made by adapting it to our setting, leading to similar results, which we omit for brevity.
}

\begin{table}[t]
\centering
\begin{tabular}{l|c|c|}
\hhline{~--}
  & \cellcolor[HTML]{C0C0C0}BGW  & \cellcolor[HTML]{C0C0C0}LCC \\ \hline
\multicolumn{1}{|l|}{\cellcolor[HTML]{C0C0C0}\shortstack[l]{Complexity\\per worker}}               & $K$         & $1$    \\ \hline
\multicolumn{1}{|l|}{\cellcolor[HTML]{C0C0C0}\shortstack[l]{Frac. data\\per worker}}        & $1$                    & $1/K$                          \\ \hline
\multicolumn{1}{|l|}{\cellcolor[HTML]{C0C0C0}Randomness}               & $ KT$                       & $T$                        \\ \hline
\multicolumn{1}{|l|}{\cellcolor[HTML]{C0C0C0}\shortstack[l]{Min. num. \\of workers}} & {\footnotesize$\deg(f)(T+1)$}       & {\footnotesize $\deg(f)(K+T-1)+1$}       \\ \hline
\end{tabular}
\vspace{1mm}
\caption{Comparison between BGW based designs and LCC. The computational complexity is normalized by that of evaluating $f$; randomness, which refers to the number of random entries used in encoding functions, is normalized by the length of $X_i$. 
}
\vspace{-7mm}
\label{table:BGW}
\end{table}


Recently,~\cite{MaddahAli} has also combined ideas from the BGW scheme  and~\cite{NIPS2017_7027} to form \textit{polynomial sharing}, a {private} coded computation scheme for arbitrary matrix polynomials. However, polynomial sharing inherits the undesired BGW property of performing a communication round for \textit{every} bilinear operation in the polynomial; a feature that drastically increases communication overhead, and is circumvented by the one-shot approach of LCC. \textit{DRACO}~\cite{chen2018draco} is also recently proposed as a secure computation scheme for gradients. Yet, DRACO employs a blackbox approach, i.e., the resulting gradients are encoded rather than the data itself, and the inherent algebraic structure of the gradients is ignored. For this approach, \cite{chen2018draco} shows that a~$2A+1$ \textit{multiplicative} factor of redundant computations is necessary. In LCC however, the blackbox approach is disregarded in favor of an algebraic one, and consequently, a~$2A$ \textit{additive} factor suffices.

LCC has also been recently applied to several applications in which security and privacy in computations are critical. For example, in~\cite{li2018polyshard}, LCC has been applied to enable a scalable and secure approach to sharding in blockchain systems. Also, in~\cite{so2019codedprivateml}, a privacy-preserving approach for machine learning has been developed that leverages LCC to provides substantial speedups over cyrptographic approaches that relay on MPC.

\section{Lagrange Coded Computing}\label{section:Lagrange}
In this Section we prove Theorem~\ref{thm:lcc} by presenting LCC and  characterizing the region for $(S,A,T)$ that it achieves.\footnote{For an algorithmic illustration, see Appendix \ref{app:alg}.} We start with an example to illustrate the key components of LCC.

\subsection{Illustrating Example}\label{section:illustrative}
Consider the function $f(X_i)=X_i^2$, where input $X_i$'s are ~$\sqrt{M}\times \sqrt{M}$ square matrices for some square integer~$M$. We demonstrate LCC in the scenario where the input data~$X$ is partitioned into~$K=2$ batches~$X_1$ and~$X_2$, and the computing system has~$N=8$ workers. In addition, the suggested scheme is 
{$1$-resilient, $1$-secure, and~$1$-private (i.e., achieves  $(S,A,T)=(1,1,1)$)}.

The gist of LCC is picking a uniformly random matrix~$Z$, 
and encoding~$(X_1,X_2, Z)$ using a Lagrange interpolation polynomial:\footnote{Assume that~$\bF$ is a finite field with~$11$ elements.}
\begin{alignat*}{3}
    u(z)&\triangleq&& X_1\cdot \frac{(z-2)(z-3)}{(1-2)(1-3)}+X_2\cdot \frac{(z-1)(z-3)}{(2-1)(2-3)}+\\
    &~&&Z\cdot \frac{(z-1)(z-2)}{(3-1)(3-2)}.
\end{alignat*}
We then fix distinct~$\{\alpha_i\}_{i=1}^8$ in~$\bF$ such that~$\{ \alpha_i \}_{i=1}^8\cap [2]=\varnothing$, and let workers~$1,\ldots,8$ store~$u(\alpha_1),\ldots,u(\alpha_8)$. 

First, note that for every~$j\in[8]$, worker~$j$ sees~$\tilde{X}_j$, a linear combination of~$X_1$ and~$X_2$ that is masked by addition of~$\lambda\cdot Z$ for some nonzero~$\lambda\in\bF_{11}$; since~$Z$ is uniformly random, this guarantees perfect privacy for~$T=1$. Next, note that worker~$j$ computes~$f(\tilde{X}_j)=f(u(\alpha_j))$, which is an evaluation of the composition polynomial~$f(u(z))$, whose degree is at most~$4$, at~$\alpha_j$.

Normally, a polynomial of degree $4$ can be interpolated from $5$ evaluations at distinct points. However, the presence of~$A=1$ adversary and~$S=1$ straggler requires the master to employ a Reed-Solomon decoder, and have \textit{three} additional evaluations at distinct points (in general, two additional evaluations for every adversary and one for every straggler). Finally, after decoding polynomial~$f(u(z))$, the master can obtain~$f(X_1)$ and~$f(X_2)$ by evaluating it at~$z=1$ and~$z=2$.

\subsection{General Description}\label{section:GeneralDescription}
Similar to Subsection~\ref{section:illustrative}, we select any~$K+T$ distinct elements~$\beta_1,\ldots,\beta_{K+T}$ from $\bF$, and find a polynomial $u:\bF\rightarrow \bV$ of degree 
{at most} $K+T-1$ such that $u(\beta_i)=X_i$ for any $i\in[K]$, and~$u(\beta_i)=Z_i$ for~$i\in\{K+1,\ldots,K+T\}$, where all~$Z_i$'s are chosen uniformly at random from~$\bV$. This is simply accomplished by letting~$u$ be the \textit{Lagrange interpolation polynomial}
\begin{align*}
    u(z)\triangleq \sum_{j\in[K]}X_j\cdot \prod_{k\in [K+T]\setminus\{j\}}\frac{z-\beta_k}{\beta_j-\beta_k}+\\
    \sum_{j=K+1}^{K+T} Z_j\cdot \prod_{k\in [K+T]\setminus\{j\}}\frac{z-\beta_k}{\beta_j-\beta_k}.
\end{align*}

We then select~$N$ distinct elements~$\{\alpha_i\}_{i\in[N]}$ from $\bF$ such that~$\{\alpha_i\}_{i\in[N]}\cap\{\beta_j\}_{j\in[K]}=\varnothing$ (this requirement is alleviated if~$T=0$), and let $\tilde{X}_i=u(\alpha_i)$ for any $i\in[N]$. That is, the input variables are encoded as \begin{align}\label{equation:perWorkerEncoding}
     \tilde{X}_i\!=\!u(\alpha_i)\!=\!(X_1,\ldots,X_K,Z_{K+1},\ldots,Z_{K+T})\cdot U_i,
 \end{align} 
where~$U\in\bF_q^{(K+T)\times N}$ is the encoding matrix~$U_{i,j}\triangleq\prod_{\ell\in[K+T]\setminus \{i\}}\frac{\alpha_j-\beta_\ell}{\beta_i-\beta_\ell}$, and~$U_i$ is its~$i$'th column.\footnote{By selecting the values of $\alpha_i$'s differently, we can recover the uncoded repetition scheme, see Appendix \ref{app:ulcc}.}

Following the above encoding, each worker~$i$ applies~$f$ on~$\tilde{X}_i$ and sends the result back to the master. Hence, the master obtains~$N-S$ evaluations, at most~$A$ of which are incorrect, of the polynomial~$f(u(z))$. Since $\deg(f(u(z)))\le\deg(f)\cdot (K+T-1)$, and~$N\ge (K+T-1)\deg(f)+S+2A+1$, the master can obtain all coefficients of~$f(u(z))$ by applying Reed-Solomon decoding. Having this polynomial, the master evaluates it at~$\beta_i$ for every~$i\in[K]$ to obtain~$f(u(\beta_i))=f(X_i)$, 
{and hence we have shown that the above scheme is $S$-resilient and $A$-secure.}



As for the $T$-privacy guarantee of the above scheme, our
proof relies on the fact that the bottom~$T\times N$ submatrix~$U^{bottom}$ of~$U$ is an MDS matrix (i.e., every~$T\times T$ submatrix of~$U^{bottom}$ is invertible, see Lemma~\ref{lemma:Uproperties} in the supplementary material). Hence, for a colluding set of workers~$\cT\subseteq [N]$ of size~$T$, their encoded data~$\tilde{X}_\cT$ satisfies~$\tilde{X}_\cT=XU_\cT^{top}+ZU_\cT^{bottom}$, where~$Z\triangleq(Z_{K+1},\ldots,Z_{K+T})$, and~$U_\cT^{top}\in \bF_q^{K\times T}$, $U_\cT^{bottom}\in \bF_q^{T\times T}$ are the top and bottom submatrices which correspond to the columns in~$U$ that are indexed by~$\cT$. 
Now, the fact that any~$U_{\cT}^{bottom}$ is invertible implies that the random padding added for these colluding workers is uniformly random, which completely masks the coded data~$XU_\cT^{top}$.  This directly guarantees~$T$-privacy.





\section{Optimality of LCC}\label{sec:converses}

 In this section, we provide a layout for the proof of optimality for LCC (i.e., Theorem \ref{thm:opt}). Formally, we define 
 that a \emph{linear encoding function} is one that computes a linear combination of the input variables (and possibly a list of independent uniformly random keys when privacy is taken into account\footnote{This is well defined as we assumed that $\bV$ is finite when $T>0$.}); while a \emph{linear decoding function} computes a linear combination of workers' output. We essentially need to prove that (a) given any multilinear $f$, any linear encoding scheme that achieves any $(S,A,T)$ requires at least $N\geq(K+T-1)\deg f+S+2A+1$ workers when $T>0$ or $N\ge K\textup{deg}\ f -1$, and $N\geq K(S+2A+1)$ workers in other cases; (b) for a general polynomial $f$, any scheme that uses linear encoding and decoding requires at least the same number of workers, if the characteristic of $\bF$ is $0$ or greater than $\textup{deg}\ f$.
 
 The proof rely on the following key lemma, which characterizes the \emph{recovery threshold} of any encoding scheme, defined as the minimum number of workers that the master needs to wait to guarantee decodability. 
 \begin{lemma}\label{lemma:rec}
 Given any  multilinear $f$, the recovery threshold of any valid linear encoding scheme, denoted by $R$, satisfies 
 \begin{align}\label{ineq:rec}
     R\geq &R_{\textup{LCC}}(N,K,f)\triangleq \nonumber \\ &\min \{(K-1)\deg f+1, \  N-\lfloor N/K\rfloor+1\}.
 \end{align}
 Moreover, if the encoding scheme is $T$ private, we have $R\geq R_\textup{LCC}(N,K,f)+T\cdot \deg f$.
  \end{lemma}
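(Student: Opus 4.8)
The plan is to establish the bound through an indistinguishability argument, which is the correct notion of (non-)decodability when the decoder is allowed to be arbitrary, as is the case for multilinear $f$ in Theorem~\ref{thm:opt}. A scheme with recovery threshold $R$ must decode $\{f(X_j)\}_{j\in[K]}$ from the outputs of \emph{every} set of $R$ workers; hence to prove $R\ge R_{\textup{LCC}}(N,K,f)$ it suffices to show that for \emph{any} valid linear encoding there is a set $\mathcal{S}\subseteq[N]$ with $|\mathcal{S}|=R_{\textup{LCC}}(N,K,f)-1$ and two datasets $X,X'$ with $f(\tilde X_i)=f(\tilde X_i')$ for all $i\in\mathcal{S}$ but $f(X_j)\neq f(X_j')$ for some $j$. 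No decoding function, however complex, can then succeed on $\mathcal{S}$.

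First I would \emph{scalarize} the problem. Writing $\bV=\bW_1\times\cdots\times\bW_{\deg f}$ and restricting each $\bW_\ell$ to a line on which $f$ is nontrivial, multilinearity reduces worker $i$'s output to a product of $\deg f$ linear forms, $f(\tilde X_i)=\prod_{\ell=1}^{\deg f}\langle u_i,X^{(\ell)}\rangle$, where $u_i\in\bF^{K}$ is the $i$-th encoding column and $X^{(\ell)}\in\bF^{K}$ collects the $\ell$-th coordinates of the data, while the target becomes $f(X_j)=\prod_\ell X_j^{(\ell)}$. The problem is now purely one about products of linear forms evaluated at the fixed points $\{u_i\}$ versus the standard basis $\{e_j\}$.

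The bound splits according to which term of the $\min$ dominates, requiring two separate constructions. For the replication term $N-\lfloor N/K\rfloor+1$, a pigeonhole count on the $N$ encoding columns produces a coordinate $j^{\ast}$ and a set of $\lfloor N/K\rfloor$ workers that can be deleted so that, after fixing the components $\ell\ge 2$, the surviving outputs are insensitive to a rank-one perturbation of $X^{(1)}$ along $e_{j^{\ast}}$; such a perturbation changes $f(X_{j^{\ast}})$ and yields the indistinguishable pair. For the interpolation term $(K-1)\deg f+1$, I would parametrize the data along vector polynomials $X^{(\ell)}(t)=\sum_{r=0}^{K-1}t^{r}v_r^{(\ell)}$ of degree $K-1$, so that each output $P_i(t)=\prod_\ell\langle u_i,X^{(\ell)}(t)\rangle$ is a univariate polynomial of degree at most $(K-1)\deg f$ in $t$; choosing the coefficient vectors $v_r^{(\ell)}$ so that the difference of two such families vanishes at the $(K-1)\deg f$ indices of $\mathcal{S}$ but not at some target index yields the pair. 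I expect this second construction to be the main obstacle: products of linear forms do not form a subspace, so the vanishing condition cannot be met by linear algebra alone, and the coefficients must be engineered so that the \emph{difference} polynomial has exactly the prescribed root pattern while keeping a target coordinate active.

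Finally, for the $T$-private refinement I would fold the $T$ independent random keys into the encoding, enlarging the encoding matrix to size $(K+T)\times N$. The $T$-privacy requirement forces every $T$-column submatrix of the random block to be invertible (exactly as in Lemma~\ref{lemma:Uproperties}), so the keys behave like $T$ additional data points whose contributions must also be resolved; repeating the interpolation construction with $K+T$ in place of $K$ raises the attainable degree to $(K+T-1)\deg f$ and hence the threshold to $R_{\textup{LCC}}(N,K,f)+T\deg f$. The delicate point here is verifying that privacy genuinely \emph{enforces} the non-degeneracy of the random encoding block, rather than merely permitting it, so that the extra $T\deg f$ term cannot be circumvented.
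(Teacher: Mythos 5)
Your high-level framing is right — the paper's proof is indeed an indistinguishability argument, and your scalarization step (restricting each multilinear slot to a line so worker $i$'s output becomes $\prod_{\ell}\langle u_i,X^{(\ell)}\rangle$) is legitimate; the paper uses exactly this reduction, but only at the very end and only for the privacy part. The gap is that the two constructions carrying the actual content of the lemma are not supplied, and you flag both yourself as unresolved obstacles. For the interpolation term $(K-1)\deg f+1$, your polynomial parametrization $X^{(\ell)}(t)$ has no traction against a \emph{general} linear encoding: the columns $u_i$ are arbitrary vectors in $\mathbb{F}^K$, not evaluations of a polynomial basis, so ``vanishing at the indices of $\mathcal{S}$'' is not a condition on the worker outputs, and as you note the products of linear forms do not form a subspace you can do linear algebra in. The paper avoids this entirely via a degree-reduction induction: from any scheme for a degree-$(d_0+1)$ multilinear $f$ on $N$ workers it builds a scheme for a degree-$d_0$ function $f'$ on $N-K$ workers with recovery threshold reduced by $K-1$ (the key device is a subset $\mathcal{K}$ of $K$ workers whose encoding columns form a basis, plus input choices $\bar{x}_{i,d_0+1}$ and vectors $a^{(j)}$ orthogonal to the columns indexed by $\mathcal{K}\setminus\{j\}$ that force those workers to output the constant $0$ by multilinearity). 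Iterating this recursion is also what produces the replication term $N-\lfloor N/K\rfloor+1$; your direct pigeonhole for that term does not go through as stated, since every encoding column can be fully supported, in which case no coordinate $j^*$ is ``seen'' by only $\lfloor N/K\rfloor$ workers.

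The privacy refinement has a second gap. Perfect privacy does \emph{not} force every $T\times T$ submatrix of the key block to be invertible (if $U^{top}_{\mathcal{T}}=0$ for some $\mathcal{T}$, no masking is needed there), so you cannot import the MDS property of Lemma~\ref{lemma:Uproperties}, which is a property of the specific LCC construction, as a consequence of the privacy constraint — this is precisely the ``delicate point'' you defer. The paper's route is different: it proves a strengthened statement (Lemma~\ref{lemma:strong}) by induction on the triple $(d,T,r)$, and the step that charges $T$ extra workers uses $I(\Gamma;\tilde{\Gamma}_{\mathcal{T}})=0$ to argue that the random keys can be \emph{fixed} as linear functions of the data so that a chosen set of $T$ workers stores $0$ and hence, by multilinearity, returns the constant $0$ and can be deleted. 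That argument, not an MDS claim, is what makes the $+T\deg f$ penalty unavoidable, and it (together with the final restriction to scalar products) is missing from your proposal.
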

 
  The proof of Lemma \ref{lemma:rec} can be found in Appendix \ref{pl:rec}, by constructing instances of the computation process for any assumed scheme that achieves smaller recovery threshold, and proving that such scheme fails to achieve decodability in these instances.
 Intuitively, note that the recovery threshold is exactly the difference between $N$ and the number of stragglers that can be tolerated, inequality (\ref{ineq:rec}) in fact proves that LCC (described in Section \ref{section:Lagrange} and Appendix \ref{pl:resiliency}) achieves the optimum resiliency, as it exactly achieves the stated recovery threshold. Similarly, one can verify that Lemma \ref{lemma:rec} essentially states that LCC achieves the optimal tradeoff between resiliency and privacy. 
  
  Assuming the correctness of Lemma \ref{lemma:rec}, the two parts of Theorem \ref{thm:opt} can be proved as follows.
  To prove part (a) of the converses, we need to extend  Lemma \ref{lemma:rec} to also take adversaries into account. This is achieved by using an extended concept of Hamming distance, defined in \cite{yu2018straggler} for coded computing. Part (b) requires generalizing  Lemma \ref{lemma:rec} to arbitrary polynomial functions, which is proved by 
  showing that for any $f$ that achieves any $(S,T)$ pair, there exists a multilinear function with the same degree for which a computation scheme can be found to achieves the same requirement.
   The detailed proofs can be found in Appendices \ref{pl:security} and \ref{pl:resiliency} respectively.





\section{Application to Linear Regression and Experiments on AWS EC2}
\label{sec:regression}

In this section we demonstrate a practical application of LCC in accelerating distributed linear regression, whose gradient computation is a quadratic function of the input dataset, hence matching well the LCC framework. We also experimentally demonstrate its performance gain over state of the arts via experiments on AWS EC2 clusters.

{\bf Applying LCC for linear regression.} Given a feature matrix $\vct{X} \in \mathbb{R}^{m\times d}$ containing $m$ data points of $d$ features, and a label vector $\vct{y} \in \mathbb{R}^m$, a linear regression problem aims to find the weight vector $\vct{w} \in \mathbb{R}^d$ that minimizes the loss $||\vct{X} \vct{w} - \vct{y}||^2$. Gradient descent (GD) solves this problem by iteratively moving the weight along the negative gradient direction, which is in iteration-$t$ computed as $2 \vct{X}^\top(\vct{X}\vct{w}^{(t)}-\vct{y})$.

To run GD distributedly over a system comprising  a master node and $n$ worker nodes, we first partition $\vct{X} = [\vct{X}_1 \cdots  \vct{X}_{n}]^\top$ into $n$ sub-matrices. Each worker stores $r$ coded sub-matrices generated from linearly combining $\vct{X}_j$s, for some parameter $1 \leq r \leq n$. Given the current weight $\vct{w}$, each worker performs computation using its local storage, and sends the result to the master. Master recovers $\vct{X}^\top\vct{X}\vct{w} = \sum_{j=1}^{n} \vct{X}_j \vct{X}_j^\top \vct{w}$ using the results from a subset of fastest workers.\footnote{Since the value of $\vct{X}^\top \vct{y}$ does not vary across iterations, it only needs to be computed once. We assume that it is available at the master for weight updates.
} To measure performance of any linear regression scheme, we consider the metric \emph{recovery threshold} (denoted by $R$), defined as the minimum number of workers the master needs to wait for, to guarantee decodability (i.e., tolerating the remaining stragglers).

We cast this gradient computation to the computing model in Section~\ref{section:formulation}, by grouping the sub-matrices into $K \!\!=\!\! \lceil\frac{n}{r}\rceil$ blocks such that $\vct{X} = [\bar{\vct{X}}_1 \cdots  \bar{\vct{X}}_{K}]^\top$. Then computing $\vct{X} \vct{X}^\top \vct{w}$ reduces to computing the sum of a degree-$2$ polynomial $f(\bar{\vct{X}}_k) = \bar{\vct{X}}_k\bar{\vct{X}}_k^\top \vct{w}$, evaluated over $\bar{\vct{X}}_1,\ldots, \bar{\vct{X}}_K$. Now, we can use LCC to decide on the coded storage as in (\ref{equation:perWorkerEncoding}), and achieve a recovery threshold of $R_{\textup{LCC}} = 2(K-1)+1=2\lceil \tfrac{n}{r}\rceil -1$ (Theorem~\ref{thm:lcc}).\footnote{This recovery threshold is also optimum within a factor of $2$, as we proved in Appendix~\ref{sec:regression_converse}.}

\noindent {\bf Comparisons with state of the arts.} The conventional uncoded scheme picks $r=1$, and has each worker $j$ compute $\vct{X}_j\vct{X}_j^\top\vct{w}$. Master needs result from each work, yielding a recovery threshold of $R_{\textup{uncoded}}=n$. By redundantly storing/processing $r>1$ \emph{uncoded} sub-matrices at each worker, the ``gradient coding'' (GC) methods~\cite{TandonLDK17,Halbawi,raviv2017gradient} code across partial gradients computed from uncoded data, and reduce the recovery threshold to $R_{\textup{GC}} = n-r+1$.
An alternative ``matrix-vector multiplication based'' (MVM) approach~\cite{lee2015speeding} requires two rounds of computation. In the first round, an intermediate vector $\vct{z} = \vct{X}\vct{w}$ is computed distributedly,
which is re-distributed to the workers in the second round for them to collaboratively compute $\vct{X}^\top\vct{z}$. 
Each worker stores coded data generated using MDS codes from $\vct{X}$ and $\vct{X}^\top$ respectively. 
MVM achieves a recovery threshold of $R_{\textup{MVM}} = \lceil \frac{2n}{r} \rceil$ in \emph{each} round, when the storage is evenly split between rounds. 

Compared with GC, LCC codes directly on data, and reduces the recovery threshold by about $r/2$ times. While the amount of computation and communication at each worker is the same for GC and LCC, LCC is expected to finish much faster due to its much smaller recovery threshold. 
Compared with MVM, LCC achieves a smaller recovery threshold than that in each round of MVM (assuming even storage split).
While each MVM worker performs less computation in each iteration, it sends two vectors whose sizes are respectively proportional to $m$ and $d$, whereas each LCC worker only sends one dimension-$d$ vector.

We run linear regression on AWS EC2 using Nesterov's accelerated gradient descent, where all nodes are implemented on \texttt{t2.micro} instances. 
We generate synthetic datasets of $m$ data points, by 1) randomly sampling a true weight $\vct{w}^{*}$, 2) randomly sampling each input $\vct{x}_i$ of $d$ features and computing its output $y_i=\vct{x}_i^\top\vct{w}^{*}$. For each dataset, we run GD for $100$ iterations over $n=40$ workers.
We consider different dimensions of input matrix $\vct{X}$ as listed in the following scenarios.
\begin{itemize}[leftmargin=*]
    \item Scenario 1 \& 2: $(m,d)=(8000,7000)$.
    \item Scenario 3: $(m,d)=(160000,500)$.
\end{itemize}

We let the system run with naturally occurring stragglers in scenario 1. To mimic the effect of slow/failed workers, we artificially introduce stragglers in scenarios 2 and 3, by imposing a $0.5$ seconds delay on each worker with probability $5\%$ in each iteration.

To implement LCC, we set the $\beta_i$ parameters to $1,...,\frac{n}{r}$, and the $\alpha_i$ parameters to $0,\ldots,n-1$. To avoid numerical instability due to large entries of the decoding matrix, 
we can embed input data into a large finite field, and apply LCC in it with exact computations. However in all of our experiments the gradients are calculated correctly without carrying out this step.

\pgfplotsset{compat=1.11,
    /pgfplots/ybar legend/.style={
    /pgfplots/legend image code/.code={%
       \draw[##1,/tikz/.cd,yshift=-0.28em]
        (0cm,0cm) rectangle (3pt,0.8em);},
   },
}

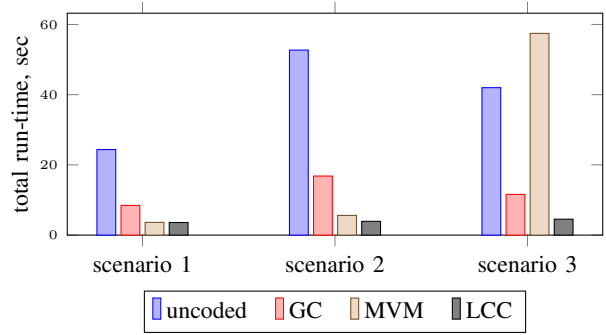
\begin{figure}[t]
\centering
\begin{tikzpicture}{center} 
\begin{axis}[
    ybar,
    yticklabel style = {font=\tiny},
    bar width=.25cm,
    width=8.7cm,
    height=.25\textwidth,
    enlarge x limits={abs=1cm},
    legend style={at={(0.5,-0.25)},
      anchor=north,legend columns=-1,/tikz/every even column/.append style={column sep=0.3cm}},
    ylabel={total run-time, sec},style = {font=\tiny},
    y label style={at={(axis description cs:-0.05,.5)},anchor=south},
    xticklabels={scenario 1,scenario 2,scenario 3}, style = {font=\small},
    xtick=data,
    ymin=0,
    ]
\addplot coordinates {(1,24.362) (1.5,52.700)(2,41.994)};
\addplot coordinates {(1,8.464) (1.5,16.821)(2,11.589)};
\addplot coordinates {(1,3.626) (1.5,5.607)(2,57.457)};
\addplot coordinates {(1,3.587) (1.5,3.925)(2,4.541)};
\legend{uncoded, GC, MVM, LCC}
\end{axis}
\end{tikzpicture}
\caption{{\small Run-time comparison of LCC with other three schemes: conventional uncoded, GC, 
and MVM. 
}}
\label{fig:run-time}
\end{figure}

{\bf Results.} 
For GC and LCC, we optimize the total run-time over $r$ subject to local memory size. 
For MVM, we further optimize the run-time over the storage assigned between two rounds of matrix-vector multiplications.
We plot the measured run-times in Figure~\ref{fig:run-time}, and list the detailed breakdowns of all scenarios in Appendix~\ref{sec:experiments}. 

We draw the following conclusions from experiments.
\begin{itemize}[leftmargin=*]
\item LCC achieves the least run-time in all scenarios. In particular, LCC speeds up the uncoded scheme by $6.79\times$-$13.43\times$, the GC scheme by $2.36$-$4.29\times$, and the MVM scheme by $1.01$-$12.65\times$.

\item In scenarios 1 \& 2 where the number of inputs $m$ is close to the number of features $d$, LCC achieves a similar performance as MVM. However, when we have much more data points in scenario 3, LCC finishes substantially faster than MVM by as much as $12.65\times$.
The main reason for this subpar performance is that MVM requires large amounts of data transfer from workers to the master in the first round and from master to workers in the second round (both are proportional to $m$). However, the amount of communication from each worker or master is  proportional to $d$ for all other schemes, which is much smaller than $m$ in scenario 3. 

\end{itemize}

\Removed{\section{Conclusion}
\input{Conclusion.tex}}

	\section*{Acknowledgement}
	This material is based upon work supported by Defense Advanced Research Projects Agency (DARPA) under Contract No. HR001117C0053, ARO award W911NF1810400, NSF grants CCF-1703575, ONR Award No. N00014-16-1-2189, and CCF-1763673. The views, opinions, and/or findings expressed are those of the author(s) and should not be interpreted as representing the official views or policies of the Department of Defense or the U.S. Government. M. Soltanolkotabi is supported by the Packard Fellowship in Science and Engineering, a Sloan Research Fellowship in Mathematics, an NSF-CAREER under award \#1846369, the Air Force Office of Scientific Research Young Investigator Program (AFOSR-YIP) under award \#FA9550-18-1-0078, an NSF-CIF award \#1813877, and a Google faculty research award. Qian Yu is supported by the Google PhD Fellowship.
	
\bibliography{main}
\bibliographystyle{ieeetr}

\section*{Supplementary Material}
\renewcommand{\thesubsection}{\Alph{subsection}}

\subsection{Algorithmic Illustration of LCC } \label{app:alg}

\begin{algorithm}
\caption{LCC Encoding (Precomputation)}\label{euclid}
\begin{algorithmic}[1]
\Procedure{Encode}{$X_1,X_2,...,X_K,T$}\Comment{Encode inputs variables according to LCC}
\State \textbf{generate} uniform random variables $Z_{K+1}, ..., Z_{K+T}$
\State \textbf{jointly compute} 
 $ \tilde{X}_i\gets \sum_{j\in[K]}X_j\cdot \prod_{k\in [K+T]\setminus\{j\}}\frac{\alpha_i-\beta_k}{\beta_j-\beta_k}+
    \sum_{j=K+1}^{K+T} Z_j\cdot \prod_{k\in [K+T]\setminus\{j\}}\frac{\alpha_i-\beta_k}{\beta_j-\beta_k}$
 for  $i=1,2,...,N$ using fast polynomial interpolation    
\State \textbf{return} $\tilde{X}_1,...,\tilde{X}_N$ \Comment{The coded variable assigned to worker $i$ is $\tilde{X}_i$}
\EndProcedure
\end{algorithmic}
\end{algorithm}

\begin{algorithm}
\caption{Computation Stage}
\begin{algorithmic}[1]
\Procedure{WorkerComputation}{$\tilde{X}$}\Comment{Each worker $i$ takes $\tilde{X}_i$ as input}
\State \textbf{return} $f(\tilde{X})$\Comment{Compute as if no coding is taking place }
\EndProcedure
\end{algorithmic}
\begin{algorithmic}[1]
\Procedure{Decode}{$S,A$}\Comment{Executed by master}
\State \textbf{wait} for a subset of fastest $N-S$ workers
\State $\mathcal{N} \gets$ identities of the fastest workers
\State $\{\tilde{Y}_i\}_{i\in\mathcal{N}} \gets$ results from the fastest workers
\State \textbf{recover} $Y_1,...,Y_K$ from $\{\tilde{Y}_i\}_{i\in\mathcal{N}}$ using fast interpolation or Reed-Solomon decoding \Comment{See Appendix \ref{app:comp} }
\State \textbf{return} $Y_1,...,Y_K$
\EndProcedure
\end{algorithmic}
\end{algorithm}

$\beta_1,\ldots,\beta_{K+T}$ and $\alpha_1,...,\alpha_N$ are global constants in $\bF$, satisfying\footnote{A variation of LCC is presented in Appendix \ref{app:ulcc}, by selecting different values of $\alpha_i$'s.} 
\begin{enumerate}
    \item 
    $\beta_i$'s are distinct,
    \item 
    $\alpha_i$'s are distinct,
    \item $\{\alpha_i\}_{i\in[N]}\cap\{\beta_j\}_{j\in[K]}=\varnothing$ (this requirement is alleviated if~$T=0$).
\end{enumerate}


\subsection{Coding Complexities of LCC} \label{app:comp}

By exploiting the algebraic structure of LCC, we can find efficient encoding and decoding algorithms with almost linear computational complexities. The encoding of LCC can be viewed as interpolating degree $K+T-1$ polynomials, and then evaluating them at $N$ points. It is known that both operations only require almost linear complexities: interpolating a polynomial of degree $k$ has a complexity of $O(k \log^2 k \log \log k)$, and evaluating it at any $k$ points requires the same \cite{kedlaya2011fast}. Hence, the total encoding complexity of LCC is at most $O(N \log^2 (K+T) \log \log (K+T) \dim \bV)$, which is almost linear to the output size of the encoder $O(N \dim \bV)$.

Similarly, when no security requirement is imposed on the system (i.e., $A=0$), the decoding of LCC can also be completed using polynomial interpolation and evaluation. An almost linear complexity $O(R \log^2 R \log \log R \dim \bU)$ can be achieved, where $R$ denotes the recovery threshold.    

A less trivial case is to consider the decoding algorithm when $A>0$, where the goal is essentially to interpolate a polynomial with at most $A$ erroneous input evaluations, or decoding a Reed-Solomon code. An almost linear time complexity can be achieved using additional techniques developed in \cite{Berlekamp:2006:NBD:2263266.2267601, 1054260,rational1,rational2}. Specifically, the following $2A-1$ \emph{syndrome variables} can be computed with a complexity of $O((N-S)\log^2 (N-S) \log\log (N-S) \dim \bU)$ using fast algorithms for polynomial evaluation and for transposed-Vandermonde-matrix multiplication \cite{pan2001matrix}. 
\begin{align}
    S_k\triangleq \sum_{i\in{\mathcal{N}}} \frac{\tilde{Y}_i \alpha_i^k}{\prod_{j\in\mathcal{N}\backslash\{i\} }(\alpha_{i}-\alpha_j)} && \forall k\in \{0,1,...,2A-1\}. 
\end{align}
According to \cite{Berlekamp:2006:NBD:2263266.2267601, 1054260}, the location of the errors (i.e., the identities of adversaries in LCC decoding) can be determined given these syndrome variables by computing its rational function approximation. Almost linear time algorithms for this operation are provided in \cite{rational1,rational2}, which only requires a complexity of $O(A\log^2 A \log\log A \dim \bU)$. After identifying the adversaries, the final results can be computed similar to the $A=0$ case. This approach achieves a total decoding complexity of $O((N-S)\log^2 (N-S) \log\log (N-S) \dim \bU)$, which is almost linear with respect to the input size of the decoder $O((N-S)\dim \bU)$. 
 
Finally, note that the adversaries can only affect a \emph{fixed} subset of $A$ workers' results for all entries. This decoding time can be further reduced by computing the final outputs entry-wise: for each iteration, ignore computing results from adversaries identified in earlier steps, and proceed decoding with the rest of the results.  




\subsection{The MDS property of $U^{bottom}$}

\begin{lemma}\label{lemma:Uproperties}
The matrix~$U^{bottom}$ is an MDS matrix.
\end{lemma}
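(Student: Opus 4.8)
Recall that the entry $U_{i,j}=\prod_{\ell\in[K+T]\setminus\{i\}}\frac{\alpha_j-\beta_\ell}{\beta_i-\beta_\ell}$ is precisely the value $L_i(\alpha_j)$, where $L_i(z)\triangleq\prod_{\ell\in[K+T]\setminus\{i\}}\frac{z-\beta_\ell}{\beta_i-\beta_\ell}$ is the $i$-th Lagrange basis polynomial associated with the interpolation nodes $\beta_1,\ldots,\beta_{K+T}$. Each $L_i$ has degree at most $K+T-1$ and satisfies $L_i(\beta_m)=\delta_{im}$. The matrix $U^{bottom}$ consists of the rows indexed by $i\in\{K+1,\ldots,K+T\}$, so being MDS is equivalent to the statement that for every choice of $T$ columns $j_1,\ldots,j_T\in[N]$, the $T\times T$ submatrix with entries $L_{K+s}(\alpha_{j_t})$ (for $s,t\in[T]$) is invertible. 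The plan is to fix such a column selection and argue nonsingularity by contradiction through a polynomial-degree argument.

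First I would suppose the chosen $T\times T$ submatrix is singular. Since a square matrix is singular exactly when it has a nonzero left-kernel vector, this yields a nonzero $\vct{\mu}=(\mu_1,\ldots,\mu_T)\in\bF^T$ with $\sum_{s=1}^{T}\mu_s\,L_{K+s}(\alpha_{j_t})=0$ for all $t\in[T]$. I would then assemble these coefficients into the single polynomial
\begin{align*}
  q(z)\triangleq\sum_{s=1}^{T}\mu_s\,L_{K+s}(z),
\end{align*}
which is a $\bF$-linear combination of the basis polynomials $L_{K+1},\ldots,L_{K+T}$ and therefore has degree at most $K+T-1$.

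The crux is to count the roots of $q$ and overshoot the degree bound. By construction $q(\alpha_{j_t})=0$ for each of the $T$ selected points $\alpha_{j_1},\ldots,\alpha_{j_T}$. In addition, for every $i\in[K]$ the Lagrange property gives $L_{K+s}(\beta_i)=\delta_{K+s,i}=0$ (as $K+s>K\ge i$), so $q(\beta_i)=0$ at the $K$ points $\beta_1,\ldots,\beta_K$. I would then verify that these $K+T$ points are pairwise distinct: the $\beta_i$'s are distinct, the $\alpha_{j_t}$'s are distinct (distinct columns), and the two families are disjoint by the standing assumption $\{\alpha_i\}_{i\in[N]}\cap\{\beta_j\}_{j\in[K]}=\varnothing$. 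Hence $q$ vanishes at $K+T$ distinct points while having degree at most $K+T-1$, forcing $q\equiv 0$. Since the Lagrange basis polynomials $\{L_i\}$ for distinct nodes are linearly independent, $q\equiv 0$ implies $\vct{\mu}=0$, contradicting the choice of $\vct{\mu}$. Therefore every such $T\times T$ submatrix is invertible and $U^{bottom}$ is MDS.

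I expect the only delicate point — rather than a genuine obstacle — to be the distinctness bookkeeping in the root count, specifically the invocation of the disjointness condition $\{\alpha_i\}\cap\{\beta_j\}_{j\in[K]}=\varnothing$, which is exactly what guarantees that the $\alpha_{j_t}$-roots do not coincide with the $\beta_i$-roots and thus that the total number of distinct roots genuinely reaches $K+T$. Everything else reduces to the standard degree argument for polynomials over a field.
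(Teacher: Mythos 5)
Your proof is correct, and it takes a noticeably different route from the paper's. The paper proves the lemma by a matrix factorization: it writes $U^{bottom}=D_1 V D_2$, where $V$ is the Lagrange evaluation matrix for the $T$ nodes $\beta_{K+1},\ldots,\beta_{K+T}$ alone (argued to be MDS via the interpolation/resiliency property of degree-$(T-1)$ polynomials), and $D_1,D_2$ are diagonal matrices whose entries $\prod_{t\in[K]}(\beta_{s+K}-\beta_t)^{-1}$ and $\prod_{t\in[K]}(\alpha_r-\beta_t)$ are nonzero precisely because of the distinctness of the $\beta_i$'s and the disjointness $\{\alpha_i\}\cap\{\beta_j\}_{j\in[K]}=\varnothing$. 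You instead work directly with the degree-$(K+T-1)$ Lagrange basis polynomials and count roots: a nonzero kernel vector would produce a polynomial $q$ of degree at most $K+T-1$ vanishing at the $T$ selected $\alpha_{j_t}$'s plus the $K$ points $\beta_1,\ldots,\beta_K$, which is too many. The two arguments are mathematically equivalent — your $K$ extra roots at $\beta_1,\ldots,\beta_K$ come from the common factor $\prod_{t\in[K]}(z-\beta_t)$ that the paper pulls out into $D_2$, and what remains after dividing it out is exactly the paper's reduction to $V$ — but yours is more self-contained: it avoids having to separately establish the MDS property of $V$ and invokes the disjointness hypothesis exactly once, in the root-distinctness bookkeeping, which you correctly identify as the only delicate point. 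The paper's factorization has the minor advantage of making explicit which structural pieces (row/column scalings versus the core interpolation matrix) carry the nonsingularity.
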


\begin{proof}
    First, let~$V\in \bF^{T\times N}$ be 
    \begin{align*}
        V_{i,j}&=\prod_{\ell \in [T]\setminus\{i\}}\frac{\alpha_j-\beta_{\ell+K}}{\beta_{i+K}-\beta_{\ell+K}}.
    \end{align*}
    It follows from the resiliency property of LCC that by having~$(\tilde{X}_1,\ldots,\tilde{X}_N)=(X_1,\ldots,X_T)\cdot V$, the master can obtain the values of~$X_1,\ldots,X_T$ from any~$T$ of the~$\tilde{X}_i$'s. This is one of the alternative definitions for an MDS code, and hence, $V$ is an MDS matrix.
    
    To show that~$U^{bottom}$ is an MDS matrix, 
    it is shown that~$U^{bottom}$ can be obtained from~$V$ by multiplying rows and columns by nonzero scalars. Let~$[T:K]\triangleq\{ T+1,T+2,\ldots,T+K \}$, and notice that for~$(s,r)\in[T]\times [N]$, entry~$(s,r)$ of~$U^{bottom}$ can be written as
    \begin{align*}
        \prod _{t\in [K+T]\setminus \{ s+K \} } \frac{\alpha_r-\beta_t}{\beta_{s+K}-\beta_t}&=\prod_{t\in[K]}\frac{\alpha_r-\beta_t}{\beta_{s+K}-\beta_t}\cdot\\
        &~\prod_{t\in [K:T]\setminus \{s+K\}}\frac{\alpha_r-\beta_t}{\beta_{s+K}-\beta_t}.
    \end{align*}
    Hence,~$U^{bottom}$ can be written as
    \begin{align}\label{equation:Vmatrix}
        U^{bottom}=&\diag\left( \left( \prod_{t\in[K]}\frac{1}{\beta_{s+K}-\beta_t} \right)_{s\in[T]} \right) \cdot V \cdot\nonumber\\
        &\diag \left( \left( \prod_{t\in[K]}(\alpha_r-\beta_t) \right)_{r\in [N]}\right),
    \end{align}
    where~$V$ is a~$T\times N$ matrix such that
    \begin{align*}
        V_{i,j}=\prod_{t\in [T]\setminus\{i\}}\frac{\alpha_{j}-\beta_{t+K}}{\beta_{i+K}-\beta_{t+K}}.
    \end{align*}
    Since~$\{ \beta_t \}_{t=1}^K\cap \{ \alpha_r \}_{r=1}^N=\varnothing$, and since all the~$\beta_i$'s are distinct, it follows from~\eqref{equation:Vmatrix} that~$U^{bottom}$ can be obtained from~$V$ by multiplying each row and each column by a nonzero element, and hence~$U^{bottom}$ is an MDS matrix as well.
\end{proof}

\subsection{The Uncoded Version of LCC} \label{app:ulcc}

In Section \ref{section:GeneralDescription}, we have described the LCC scheme, which provides an~$S$-resilient, $A$-secure, and~$T$-private scheme as long as   $(K+T-1)\deg f+S+2A+1\leq N$. 
Instead of explicitly following the same construction, a variation of LCC can be made by instead selecting the values of $\alpha_i$'s from the set $\{\beta_j\}_{j\in[K]}$ (not necessarily distinctly). 

We refer to this approach as the \emph{uncoded version of LCC}, which essentially recovers the \emph{uncoded repetition} scheme, which simply replicates each $X_i$ onto multiple workers. By replicating every~$X_i$ between~$\lfloor N/K \rfloor$ and~$\lceil N/K \rceil$ times, it can tolerate at most $S$ stragglers and $A$ adversaries, whenever
\begin{align}
    S+2A\leq\lfloor N/K \rfloor -1, 
\end{align}
which achieves the optimum resiliency and security when the number of workers is small and no data privacy is required (specifically, $N< K\deg f-1$ and $T=0$, see Section \ref{sec:converses}). 



{When privacy is taken into account (i.e., $T>0$), an alternative approach in place of repetition is to instead store each input variable using Shamir's secret sharing scheme \cite{Shamir:1979:SS:359168.359176} over $\lfloor N/K \rfloor$ to~$\lceil N/K \rceil$ machines. This approach achieves any $(S,A,T)$ tuple whenever $N\geq K(S+2A+\textup{deg}\ f  \cdot T + 1)$. However, it does not improve LCC.}

\subsection{Proof of Lemma \ref{lemma:rec}} \label{pl:rec}

We start by defining the following notations. For any multilinear function $f$ defined on $\bV$ with degree $d$, let
$X_{i,1},X_{i,2},...,X_{i,d}$ denote its $d$ input entries (i.e., $X_i= (X_{i,1},X_{i,2},...,X_{i,d})$ and $f$ is linear with respect to each entry). Let $\bV_1,...,\bV_d$ be the vector space that contains the values of the entries. For brevity, we denote $\deg f$ by $d$ in this appendix. We first provide the  proof of inequality (\ref{ineq:rec}).

\begin{proof}[Proof of inequality (\ref{ineq:rec})]
Without loss of generality, we assume both the encoding and decoding functions are deterministic in this proof, as the randomness does not help with decodability.\footnote{Note that this argument requires the assumption that the decoder does not have access to the random keys, as assumed in Section \ref{section:formulation}.} 
Similar to \cite{yu2018straggler}, we define the minimum recovery threshold, denoted by $R^*(N,K,f)$, as the minimum number of workers that the master has to wait to guarantee decodability, among all linear encoding schemes. Then we essentially need to prove that $R^*(N,K,f)\geq R^*_{\textup{LCC}}(N,K,f)$, i.e., $R^*(N,K,f)\geq (K-1)d+1$ when $N\ge Kd-1$, and $R^*(N,K,f)\geq N-\lfloor N/K \rfloor +1$ when $N< Kd-1$.

Obviously $R^*(N,K,f)$ is a non-decreasing function with respect to $N$. Hence, it suffices to prove that  $R^*(N,K,f)\geq N-\lfloor N/K \rfloor +1$ when $N\leq Kd-1$.  We prove this converse bound by induction.

(a) If $d=1$, then $f$ is a linear function, and we aim to prove $R^*(N,K,f)\geq N+1$ for $N\leq K-1$. This essentially means that no valid computing schemes can be found when $N<K$.
Assuming the opposite, suppose we can find a valid computation design using at most $K-1$ workers, then there is a decoding function that computes all $f(X_i)$'s given the results from these workers. 

Because the encoding functions are linear, 
we can thus find a non-zero vector $(a_1,...,a_K)\in \bF^{K}$ such that when $X_i=a_iV$ for any $V\in \bV$, the coded variable $\tilde{X}_i$ stored by any worker equals the padded random key, which is a constant. This leads to a fixed output from the decoder. 
On the other hand, because $f$ is assumed to be non-zero, the computing results $\{f(X_i)\}_{i\in[K]}$ is variable for different values of $V$, which leads to a contradiction. Hence, we have prove the converse bound for $d=1$. 


(b) Suppose we have a matching converse for any multilinear function with $d=d_0$. We now prove the lower bound for any  multilinear function $f$ of degree $d_0+1$. Similar to part (a), it is easy to prove that $R^*(N,K,f)\geq N+1$ for $N\leq K-1$. Hence, we focus on $N\geq K$.

The proof idea is to construct a multilinear function $f'$ with degree $d_0$ based on function $f$, and to lower bound the minimum recovery threshold of $f$ using that of $f'$. More specifically, this is done by showing that given any computation design for function $f$, a computation design can also be developed for the corresponding $f'$, which achieves a recovery threshold that is related to that of the scheme for $f$.

In particular, for any non-zero function $f(X_{i,1},X_{i,2},...,X_{i,d_0+1})$, we let $f'$ be a function which takes inputs $X_{i,1},X_{i,2},...,X_{i,d_0}$ and returns a linear map, such that given any  $X_{i,1},X_{i,2},...,X_{i,d_0+1}$, we have $f'(X_{i,1},X_{i,2},...,X_{i,d_0})(X_{i,d_0+1})=f(X_{i,1},X_{i,2},...,X_{i,d_0+1})$. One can verify that $f'$ is a multilinear function with degree $d_0$,
Given parameters $K$ and $N$, we now develop a computation strategy for $f'$ for a dataset of $K$ inputs and a cluster of $N'\triangleq N-K$ workers, which achieves a recovery threshold of $R^*(N,K,f)-(K-1)$. 
We construct this computation strategy based on an encoding strategy of $f$ that achieves the recovery threshold $R^*(N,K,f)$. For brevity, we refer to these two schemes as the $f'$-scheme and $f$-scheme respectively.

Because the encoding functions are linear, we consider the encoding matrix, denoted by $G\in \bF^{K\times N}$, and defined as the coefficients of the encoding functions $\tilde{X}_i=\sum_{j=1}^{K} X_jG_{ji}+\tilde{z}_i$, where $\tilde{z}_i$ denotes the value of the random key padded to variable $\tilde{X}_i$. Following the same arguments we used in the $d=1$ case, the left null space of $G$ must be $\{0\}$. Consequently, the rank of $G$ equals $K$, and we can find a subset $\mathcal{K}$ of $K$ workers such that the corresponding columns of $G$ form a basis of $\bF^{K}$. 
Hence, we can construct the $f'$-scheme by letting each of the~$N'\triangleq N-K$ workers store the coded version of~$(X_{i,1},X_{i,2},\ldots,X_{i,d_0})$ that is stored by a unique respective worker in $[N]\setminus\mathcal{K}$ in $f$-scheme.\footnote{For breivity, in this proof we instead index these $N-K$ workers also using the set $[N]\setminus\mathcal{K}$, following the natural bijection.}


Now it suffices to prove that the above construction achieves a recovery threshold of $R^*(N,K,f)-(K-1)$. Equivalently, we need to prove that given any subset $\mathcal{S}$ of $[N]\backslash \mathcal{K}$ of size $R^*(N,K,f)-(K-1)$, the values of $f(X_{i,1},X_{i,2},...,X_{i,d_0},x)$ for any $i\in[K]$ and $x\in\bV$ are decodable from the computing results of workers in $\mathcal{S}$. 

We exploit the decodability of the computation design for function $f$. For any $j\in\mathcal{K}$, the set $\mathcal{S}\cup \mathcal{K}\backslash\{j\}$ has size $R^*(N,K,f)$. Consequently, 
for any vector $(x_{1,d_0+1},...,x_{K,d_0+1})\in\bV_{d_0+1}^K$, we have that $\{ f(X_{i,1},X_{i,2},...,X_{i,d_0},x_{i,d_0+1})\}_{i\in[K]}$ is decodable given the results from workers in $\mathcal{S}\cup \mathcal{K}\backslash\{j\}$ computed in $f$-scheme, if each $x_{i,d_0+1}$ is used as the $(d_0+1)$th entree for each input.

Because columns of $G$ with indices in $\mathcal{K}$ form a basis of $\bF^{K}$, we can find values for each input $X_{i,d_0+1}$ such that workers in $\mathcal{K}$ would store $0$ for the $X_{i,d_0+1}$ entry in the $f$-scheme. We denote these values by $\bar{x}_{1,d_0+1},...,\bar{x}_{K,d_0+1}$. Note that if these values are taken as inputs, workers in $\mathcal{K}$ would return constant $0$ {due to the multilinearity of $f$.} Hence, decoding $f(X_{i,1},X_{i,2},...,X_{i,d_0},\bar{x}_{i,d_0+1})$ only requires results from workers not in $\mathcal{K}$, i.e., it can be decoded given computing results from workers in $\mathcal{S}$ using the $f$-scheme. Note that these results can be directly computed from corresponding results in the $f'$-scheme. We have proved the decodability of $f(X_{i,1},X_{i,2},...,X_{i,d_0},x)$ for $x=\bar{x}_{i,d_0+1}$. 

Now it remains to prove the decodability of $f(X_{i,1},X_{i,2},...,X_{i,d_0},{x})$ for each $i$ for general ${x}\in\bV$. 
For any $j\in\mathcal{K}$, let $\boldsymbol{a}^{(j)}\in\bF^K$ be a non-zero vector that is orthogonal to all columns of $G$ {with indices in  $\mathcal{K}\backslash\{j\}$}. 
If $a_{i}^{(j)} x+\bar{x}_{i,d_0+1}$ is used for each input $X_{i,d_0+1}$ in the $f$-scheme, then
workers in $\mathcal{K}\backslash\{j\}$ would store $0$ for the $X_{i,d_0+1}$ entry, and return constant $0$ {due to the multilinearity of $f$.}~ Recall that $f(X_{i,1},X_{i,2},...,X_{i,d_0},a_{i}^{(j)} x+\bar{x}_{i,d_0+1})$ is assumed to be decodable in the $f$-scheme given results from workers in $\mathcal{S}\cup \mathcal{K}\backslash\{j\}$ . Following the same arguments above, one can prove that $f(X_{i,1},X_{i,2},...,X_{i,d_0},a_{i}^{(j)} x+\bar{x}_{i,d_0+1})$ is also decodable using the $f'$-scheme.
Hence, the same applies for $a_{i}^{(j)} f(X_{i,1},X_{i,2},...,X_{i,d_0},x)$ due to multilinearity of $f$.

Because columns of $G$ with indices in $\mathcal{K}$ form a basis of $\bF^{K}$, the vectors $\boldsymbol{a}^{(j)}$ for $j\in\mathcal{K}$ also from a basis. Consequently, for any $i$ there is a non-zero $a_{i}^{(j)}$, and thus $f(X_{i,1},X_{i,2},...,X_{i,d_0},x)$ is decodable. This completes the proof of decodability.

To summarize, we have essentially proved that $R^*(N,K,f)-(K-1)\geq R^*(N-K,K,f')$. 
We can verify that the converse bound $R^*(N,K,f)\geq N-\lfloor N/K \rfloor +1$ under the condition $N\leq Kd-1$ can be derived given the above result and the induction assumption, for any function $f$ with degree $d_0+1$.

(c) Thus, a matching converse holds for any $d\in\mathbb{N}_+$, which proves inequality (\ref{ineq:rec}).
\end{proof}

Now we proceed to prove the rest of Lemma \ref{lemma:rec}, explicitly, we aim to prove that the recovery threshold of any $T$-private encoding scheme is at least $R_\textup{LCC}(N,K,f)+T\cdot \deg f$. Inequality (\ref{ineq:rec}) essentially covers the case for $T=0$. Hence, we focus on $T>0$. 
To simplify the proof, we prove a stronger version of this statement: when $T>0$, any valid $T$-private encoding scheme uses at least $N\geq R_\textup{LCC}(N,K,f)+T\cdot \deg f$ workers. Equivalently, we aim to show that $N\geq (K+T-1) \deg f+1$ for any such scheme.

We prove this fact using an inductive approach. To enable an inductive structure, we prove a even stronger converse by considering a more general class of computing tasks and a larger class of encoding schemes, formally stated in the following lemma.


\begin{lemma}\label{lemma:strong}
Consider a dataset with inputs $X\triangleq (X_1,...,X_K)\in(\bF^{d})^K$, and an input vector $\Gamma \triangleq(\Gamma_1,...,\Gamma_K)$ which belongs to a given subspace of $\bF^{K}$ with dimension $r>0$; a set of $N$ workers where each can take a coded variable in $\bF^{d+1}$ and return the product of its elements; and a computing task where the master aim to recover $Y_i\triangleq X_{i,1}\cdot...\cdot X_{i,d} \cdot \Gamma_{i}$. If the inputs entries are encoded separately such that each of the first $d$ entries assigned to each worker are some $T_{\textup{X}}>0$-privately linearly coded version of the corresponding entries of $X_i$'s,
 and  the $(d+1)$th entry assigned to each worker is a $T$-privately\footnote{For this lemma, we assume that no padded random variable is used for a $0$-private encoding scheme.} linearly coded version of $\Gamma$, moreover, if each $\Gamma_i$ (as a variable) is non-zero, then any valid computing scheme requires $N\geq (T_\textup{X}+K-1)d+T+r$. 
\end{lemma}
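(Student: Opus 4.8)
The plan is to prove the inequality by induction on $d$, the number of product factors that encode the $X_i$'s, mirroring the degree-peeling argument used for inequality~(\ref{ineq:rec}) but now carrying the privacy parameters along. Since randomness cannot aid decodability, I treat all encoders and decoders as deterministic, and I write the encoding of the $d$-th $X$-entry at worker $w$ as $\tilde X_{d,w}=\sum_{i=1}^{K}G_{iw}X_{i,d}+\sum_{j=1}^{T_\textup{X}}H_{jw}R_j$, collecting the data and key coefficients into $P=\begin{bmatrix}G\\ H\end{bmatrix}\in\bF^{(K+T_\textup{X})\times N}$.

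For the base case $d=0$ there are no $X$-entries, and the claim reduces to $N\ge T+r$: each worker merely returns a $T$-private linear coding of $\Gamma$, which ranges over an $r$-dimensional subspace. Writing the worker outputs as $M\theta+M'R'$, with $\theta\in\bF^{r}$ a coordinate vector for $\Gamma$ and $R'\in\bF^{T}$ the keys, decodability forces $\operatorname{Im}M\cap\operatorname{Im}M'=\{0\}$ with $M$ injective, while $T$-privacy forces every $T\times T$ block of the key matrix $M'$ to be invertible, hence $\operatorname{rank}M'=T$; together these give $N\ge\operatorname{rank}[M\,|\,M']\ge r+T$, the familiar ramp secret-sharing bound.

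For the inductive step I would peel the $d$-th $X$-entry and fold its effect into $\Gamma$. I first record the non-degeneracy facts I need: $T_\textup{X}$-privacy of the $d$-th entry makes $H$ an MDS matrix (any $T_\textup{X}$ columns independent, exactly as in Lemma~\ref{lemma:Uproperties}); decodability forces $G$ to have full row rank $K$ (otherwise a nonzero left-null direction $a$ lets one set $X_{\cdot,d}=as$, leaving every $\tilde X_{d,w}$, hence every output, independent of $s$ while $Y_i$ varies with $s$, by multilinearity); and in fact $P$ has full row rank $K+T_\textup{X}$, since a left-null vector $(a,b)$ of $P$ would let one set $X_{\cdot,d}=as$ and $R=bs$ to zero out $\tilde X_{d,w}$ at \emph{every} worker while the target $Y_i$ stays nonzero. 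I then pick a vector $v=(x^{*},\rho)\in\bF^{K+T_\textup{X}}$ that is orthogonal to the $P$-columns of at least $K+T_\textup{X}-1$ workers; by multilinearity, setting $X_{\cdot,d}=x^{*}$ and $R=\rho$ makes each such worker output $0$, so they can be discarded. On each surviving worker the output becomes $\prod_{t=1}^{d-1}\tilde X_{t,w}\cdot(\gamma_w\tilde\Gamma_w)$ with $\gamma_w=v^{\top}P_{\cdot w}\neq 0$; defining $\Gamma'_i\triangleq x^{*}_{i}\Gamma_i$, the factor $\gamma_w\tilde\Gamma_w$ is a $T$-private linear coding of $\Gamma'$, which ranges over the $r$-dimensional image of the old subspace and has every coordinate nonzero because $x^{*}$ has no zero entry. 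Hence the survivors form a valid instance of the lemma at degree $d$ with the same $K,T_\textup{X},T,r$, so the induction hypothesis gives (survivors) $\ge(T_\textup{X}+K-1)(d-1)+T+r$; adding back the $\ge K+T_\textup{X}-1$ discarded workers yields $N\ge(T_\textup{X}+K-1)d+T+r$.

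The main obstacle is producing the zeroing vector $v=(x^{*},\rho)$ that simultaneously (i) annihilates at least $K+T_\textup{X}-1$ worker columns of $P$ and (ii) has $x^{*}$ with \emph{all} entries nonzero; condition (ii) is precisely what keeps each $\Gamma'_i$ a genuine nonzero variable and preserves both $K$ and $r$ in the reduced instance. Because $P$ has rank $K+T_\textup{X}$, any $K+T_\textup{X}-1$ independent columns leave a one-dimensional orthogonal complement, so a $v$ meeting (i) always exists; the difficulty is that this line could lie inside one of the coordinate hyperplanes $\{x^{*}_i=0\}$. I expect to dispatch this by exploiting the freedom in \emph{which} workers to zero out, combined with the MDS structure of $H$ and a genericity argument valid over a sufficiently large field $\bF$, to show that some admissible choice of columns avoids all $K$ coordinate hyperplanes. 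Finally, specializing the lemma with $T_\textup{X}=T$, $r=K$, and $d=\deg f-1$ (i.e. taking $\Gamma_i=X_{i,\deg f}$ ranging freely over $\bF^{K}$) recovers the privacy bound $N\ge(K+T-1)\deg f+1$ for the degree-$\deg f$ monomial, and a restriction argument reduces an arbitrary multilinear $f$ to this monomial.
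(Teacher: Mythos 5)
Your overall architecture is genuinely different from the paper's: you run a single induction on $d$, pushing both privacy parameters and the subspace dimension $r$ down to a $d=0$ base case that you settle with a ramp-secret-sharing rank bound, and you reduce the degree by zeroing out $K+T_\textup{X}-1$ workers at once. The paper instead runs a four-part induction on the tuple $(d,T,r)$: the degree-reduction step discards no workers at all (it simply reinterprets the last $X$-entry as a new $\Gamma$ with $r=K$ and $T=T_\textup{X}$), and all worker-discarding happens in the increments of $r$ and $T$, where the object being restricted is a single scalar coordinate $\tilde{\Gamma}_i$ rather than a $(K+T_\textup{X})$-dimensional column. Your base case is essentially fine (the bound $N\geq r+T$ is the standard one the paper also invokes elsewhere), and the bookkeeping $(T_\textup{X}+K-1)(d-1)+T+r+(K+T_\textup{X}-1)=(T_\textup{X}+K-1)d+T+r$ is correct.

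However, the inductive step has a genuine gap exactly where you flag it, and it is not a technicality that genericity will dispatch. You need a nonzero $v=(x^{*},\rho)$ orthogonal to the $P$-columns of at least $K+T_\textup{X}-1$ workers \emph{and} with every entry of $x^{*}$ nonzero; if some $x^{*}_i=0$ then $\Gamma'_i\equiv 0$ and the reduced instance no longer satisfies the lemma's hypotheses (it effectively loses an input, and the induction yields a strictly weaker bound). Such a $v$ need not exist for an arbitrary full-rank, $T_\textup{X}$-private encoding matrix: for example with $K=2$, $T_\textup{X}=1$, take $H=(1,1,\ldots,1)$ and let all columns of $G$ agree in their first coordinate (e.g.\ $(1,0),(1,1),(1,2),\ldots$); then $G$ has full row rank and the entry is $1$-private, yet any $v$ annihilating two columns is forced to have $v_2=0$. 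Ruling out such configurations requires invoking decodability of the full \emph{product} scheme — not just rank and privacy of the single entry — and that is precisely the argument that is missing. The paper's proof is structured to avoid ever needing a single ``good'' restriction: in its step (b), case (ii), when every worker's $\tilde{\Gamma}_i$ is proportional to some $\Gamma_j$ and no one restriction suffices, it partitions the workers into classes, applies the induction hypothesis once per class, and \emph{sums} the resulting inequalities to extract the bound. Some analogue of that averaging device (or a direct proof that validity of the product scheme forces a good $v$ to exist) is needed before your inductive step closes.
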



\begin{proof}

Lemma \ref{lemma:strong} is proved by induction with respect to the tuple $(d,T,r)$. Specifically, we prove that (a) Lemma \ref{lemma:strong} holds when $(d,T,r)=(0,0,1)$; (b) If Lemma \ref{lemma:strong} holds for any $(d,T,r)=(d_0,0,r_0)$, then it holds when $(d,T,r)=(d_0,0,r_0+1)$; (c) If Lemma \ref{lemma:strong} holds for any $(d,T,r)=(d_0,0,r_0)$, then it holds when $(d,T,r)=(d_0,T,r_0)$ for any $T$; (d) If Lemma \ref{lemma:strong} holds for any $d=d_0$ and arbitrary values of $T$ and $r$, then it holds if $(d,T,r)=(d_0+1,0,1)$. Assuming the correctness of these statements, Lemma \ref{lemma:strong} directly follows by induction's principle. Now we provide the proof of these statements as follows.

(a).  When $(d,T,r)=(0,0,1)$, we need to show that at least $1$ worker is needed. This directly follows from the decodability requirement, because the master aims to recover a variable, and at least one variable is needed to provide the information. 

(b). Assuming that for any $(d,T,r)=(d_0,0,r_0)$ and any $K$ and $T_\textup{X}$, any valid computing scheme requires $N\geq  (T_\textup{X}+K-1)d_0+r$ workers, we need to prove that for $(d,T,r)=(d_0,0,r_0+1)$, at least $(T_\textup{X}+K-1)d_0+r_0+1$ workers are needed. We prove this fact by fixing an arbitrary valid computing scheme for $(d,T,r)=(d_0,0,r_0+1)$. For brevity, let $\tilde{\Gamma}_i$ denotes the coded version of $\Gamma$ stored at worker $i$. We consider the following two possible scenarios:  (i) there is a worker $i$ such that $\tilde{\Gamma}_i$ is not identical (up to a constant factor) to any variable $\Gamma_j$, or (ii) for any worker $i$, $\tilde{\Gamma}_i$ is identical (up to a constant factor) to some $\Gamma_j$.

For case (i), similar to the ideas we used to prove inequality (\ref{ineq:rec}), it suffices to show that if the given computing scheme uses $N$ workers, we can construct another computation scheme achieving the same $T_X$, for a different computing task with parameters $d=d_0$ and $r= r_0$, using at most $N-1$ workers.

Recall that  we assumed that there is a worker $i$, such that $\tilde{\Gamma}_i$ is not identical (up to a constant factor) to any $\Gamma_j$. 
We can always restrict the value of $\Gamma$ to a subspace with dimension $r_0$, such that $\tilde{\Gamma}_i$ becomes a constant $0$. After this operation, from the computation results of the rest $N-1$ workers, the master can recover a computing function with $r=r_0$ and non-zero $\Gamma_j$'s, which provides the needed computing scheme. 

For case (ii), because each $\Gamma_j$ is assumed to be non-zero, we can partition the set of indices $j$ into distinct subsets, such that any $j$ and $j'$ are in the same subset iff $\Gamma_j$ is a constant multiple of $\Gamma_{j'}$. We denote these subsets by $\mathcal{J}_1,...,\mathcal{J}_m$. Moreover, for any $k\in[m]$, let $\mathcal{I}_k$ denote the subset of indices $i$ such that $\tilde{\Gamma}_i$ is identical (up to a constant factor) to $\Gamma_j$ for $j$ in $\mathcal{J}_k$. 

Now for any $k\in[m]$, we can restrict the value of $\Gamma$ to a subspace with dimension $r_0$, such that $\Gamma_j$ is zero for any $j\in\mathcal{J}_k$. After applying this operation, from the computation results of workers in $[N]\backslash \mathcal{I}_k$, the master can recover a computing function with $r=r_0$, where $K'=K-|\mathcal{J}_k|$ sub-functions has non-zero $\Gamma_j$'s. By applying the induction assumption on this provided computing scheme, we have $N-|\mathcal{I}_k|\geq (T_\textup{X}+K-|\mathcal{J}_k|-1)d_0+r_0$.
By taking the summation of the this inequality over $k\in [m]$, we have 
\begin{align}\label{ineq:strong}
    Nm-\sum_{k=1}^m|\mathcal{I}_k|\geq (T_\textup{X}m+Km-K-m)d_0+r_0m. 
\end{align}
Recall that for any worker $i$, $\tilde{\Gamma}_i$ is identical (up to a constant factor) to some $\Gamma_j$, we have $\cup_{k\in[m]} \mathcal{I}_k=[N]$. Thus, $\sum_{k}|\mathcal{I}_k|\geq N$. Consequently, inequality (\ref{ineq:strong}) implies that 
\begin{align}\label{ineq:strong2}
    Nm-N\geq (T_\textup{X}m+Km-K-m)d_0+r_0m. 
\end{align}
Note that $r_0+1>1$, which implies that at least two ${\Gamma}_j$'s are not identical up to a constant factor. Hence, $m-1>0$, and (\ref{ineq:strong2}) is equivalently
\begin{align}
    N&\geq \frac{(T_\textup{X}m+Km-K-m)d_0+r_0m}{m-1}\\
    &= (T_\textup{X}+K-1)d_0+r_0+\left((T_\textup{X}-1)d_0+r_0\right)\frac{1}{m-1}.
\end{align}
Since $T_\textup{X}$ and $r_0$ are both positive, we have $(T_\textup{X}-1)d_0+r_0>0$. Consequently, $ \left((T_\textup{X}-1)d_0+r_0\right)\frac{1}{m-1}>0$, and we have 
\begin{align}
    N\geq(T_\textup{X}+K-1)d_0+r_0+1,
\end{align}
which proves the induction statement.




(c). Assuming that for any $(d,T,r)=(d_0,0,r_0)$, any valid computing scheme requires $N\geq (T_\textup{X}+K-1)d_0+r_0$ workers, we need to prove that for $(d,T,r)=(d_0,T_0,r_0)$, $N\geq (T_\textup{X}+K-1)d_0+T_0+r_0$. Equivalently, we aim to show that for any $T_0>0$, in order to provide $T_0$-privacy to the $d_0+1$th entry, $T_0$ extra worker is needed.
Similar to the earlier steps, we consider an arbitrary valid computing scheme for $(d,T,r)=(d_0,T_0,r_0)$ that uses $N$ workers. We aim to construct a new scheme for $(d,T,r)=(d_0,0,r_0)$, for the same computation task and the same $T_\textup{X}$, which uses  at most $N-T_0$ workers.

Recall that if an encoding scheme is $T_0$ private, then given any subset of at most $T_0$ workers, denoted by $\mathcal{T}$, we have $I(\Gamma;\tilde{\Gamma}_\mathcal{T})=0$. Consequently, conditioned on $\tilde{\Gamma}_\mathcal{T}=0$, the entropy of the variable $\Gamma$ remains unchanged. This indicates that $\Gamma$ can be any possible value when $\tilde{\Gamma}_\mathcal{T}=0$. Hence, we can let the values of the padded random variables be some linear combinations of the elements of $\Gamma$, such that worker in $\mathcal{T}$ returns constant $0$. 

Now we construct an encoding scheme as follows. Firstly it is easy to show that when the master aims to recover a non-constant function, at least $T_0+1$ workers are needed to provide non-zero information regarding the inputs. Hence, we can arbitrarily select a subset of $T_0$ workers, denoted by $\mathcal{T}$. 
As we have proved, we can find fix the values of the padded random variables such that $\tilde{\Gamma}_{\mathcal{T}}=0$. Due to multilinearity of the computing task, these workers in $\mathcal{T}$ also returns constant $0$. Conditioned on these values, the decoder essentially computes the final output only based on the rest $N-T_0$ workers, which provides the needed computing  scheme. Moreover, as we have proved that the values of the padded random variables can be chosen to be some linear combinations of the elements of $\Gamma$, our obtained computing scheme encodes $\Gamma$ linearly.
This completes the proof for the induction statement.

(d). Assuming that for any $d=d_0$ and arbitrary values of $T$ and $r$, any valid computing scheme requires $N\geq (T_\textup{X}+K-1)d_0+T+r$ workers, we need to prove that for $(d,T,r)=(d_0+1,0,1)$, $N\geq (T_\textup{X}+K-1)(d_0+1)+1$. Observing that for any computing task with $r=1$, by fixing an non-zero $\Gamma$, it essentially computes $K$ functions where each multiplies $d_0$ variables. Moreover, for each function, by viewing the first $(d_0-1)$ entries as a vector $X'_i$ and by viewing the last entry as a scalar $\Gamma'_i$, it essentially recovers the case where the parameter $d$ is reduced by $1$, $K$ remain unchanged, and $r$ equals $K$. By adapting any computing scheme in the same way, we have $T_\textup{X}$ remain unchanged, and $T$ becomes $T_\textup{X}$. Then by induction assumption, any computing scheme for  $(d,T,r)=(d_0+1,0,1)$ requires at least $(T_\textup{X}+K-1)d_0+T_\textup{X}+K=(T_\textup{X}+K-1)(d_0+1)+1$ workers.   
\end{proof}

\begin{remark}
Using exactly the same arguments, Lemma \ref{lemma:strong} can be extended to the case where the entries of $X$ are encoded under different privacy requirements. Specifically, if the $i$th entry is $T_i$-privately encoded, then at least $\sum_{i=1}^{d}T_i+(K-1)d+T+r$ worker is needed. Lemma \ref{lemma:strong} and this extended version are both tight, in the sense for any parameter values of $d$, $K$ and $r$, there are computing tasks where a computing scheme that uses the matching number of workers can be found, using  constructions similar to the Lagrange coded computing.  
\end{remark}

Now using Lemma \ref{lemma:strong}, we complete the proof of Lemma \ref{lemma:rec} for $T>0$. Similar to the proof ideas for inequality (\ref{ineq:rec}) part (a), we consider any multilinear function $f$ with degree $d$, and we find constant vectors $V_1,...,V_d$, such that $f(V_1,...,V_d)$ is non-zero. Then by restricting the input variables to be constant multiples of $V_1,...,V_d$, this computing task reduces to multiplying $d$ scalars, given $K$ inputs. As stated in Lemma \ref{lemma:strong} and discussed in part (d) of its induction proof, such computation requires $(T+K-1)d+1$ workers. This completes the proof of Lemma \ref{lemma:rec}.



\subsection{Optimality on the Resiliency-Security-Privacy Tradeoff for Multilinear Functions } \label{pl:security}


In this appendix, we prove the first part of Theorem \ref{thm:opt} using Lemma \ref{lemma:rec}. Specifically, we aim to prove that LCC achieves the optimal trade-off between resiliency, security, and privacy for any multilinear function $f$. By comparing Lemma \ref{lemma:rec} and the achievability result presented in Theorem \ref{thm:lcc} and Appendix \ref{app:ulcc}, we essentially need to show that for any linear encoding scheme that can tolerates $A$ adversaries and $S$ stragglers, it can also tolerate $S+2A$ stragglers.

This converse can be proved by connecting the straggler mitigation problem and the adversary tolerance problem using the extended concept of Hamming distance for coded computing, which is defined in \cite{yu2018straggler}. 
Specifically, given any (possibly random) encoding scheme, its hamming distance is defined as the minimum integer, denoted by $d$, such that for any two instances of input $X$ whose outputs $Y$ are different, and for any two possible realizations of the $N$ encoding functions, the computing results given the encoded version of these two inputs, using the two lists of encoding functions respectively, differs for at least $d$ workers.

It was shown in \cite{yu2018straggler} that this hamming distance behaves similar to its classical counter part: an encoding scheme is $S$-resilient and $A$-secure whenever $S+2A\leq d-1$. 
\Added{Hence, for any encoding scheme that is $A$-secure and $S$-reselient, it has a hamming distance of at least $S+2A+1$. Consequently it can tolerate $S+2A$ stragglers. }
Combining the above and Lemma \ref{lemma:rec}, we have completed the proof.

\subsection{Optimality on the Resiliency-Privacy Tradeoff for General Multivariate Polynomials} \label{pl:resiliency}

In this appendix, we prove the second part of Theorem \ref{thm:opt} using Lemma \ref{lemma:rec}. Specifically, we aim to prove that LCC achieves the optimal trade-off between resiliency and privacy, for general multivariate polynomial $f$. 
The proof is carried out by showing that for any function $f$ that allows $S$-resilient $T$-private designs, there exists a multilinear function with the same degree for which a computation scheme can be found that achieves the same requirement.





Specifically, given any function $f$ with degree $d$, we aim to provide an explicit construction of an multilinear function, denoted by $f'$, which achieves the same requirements. The construction satisfies certain properties to ensure this fact. Both the construction and the properties are formally stated in the following lemma (which is proved in Appendix \ref{pl:basic}):
 
\begin{lemma}\label{lemma:basic}
Given any function $f$ of degree $d$, let $f'$ be a map from $\bV^{d}\rightarrow \bU$ such that 
 $   f'(Z_{1},...,Z_{d})=\sum_{\mathcal{S}\subseteq[d]}{(-1)^{|\mathcal{S}|}f(\sum_{j\in\mathcal{S}}Z_{j})}$ for any $\{Z_{j}\}_{j\in[d]}\in\bV^{d}$. Then $f'$ is multilinear with respect to the $d$ inputs. Moreover, if the characteristic of the base field $\mathbb{F}$ is $0$ or greater than $d$, then $f'$ is non-zero.
\end{lemma}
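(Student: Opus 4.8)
The plan is to prove the two claims of Lemma~\ref{lemma:basic} separately: first multilinearity of $f'$, then non-vanishing under the characteristic hypothesis. The defining expression
\[
f'(Z_1,\ldots,Z_d)=\sum_{\mathcal{S}\subseteq[d]}(-1)^{|\mathcal{S}|}f\!\left(\sum_{j\in\mathcal{S}}Z_j\right)
\]
is precisely the $d$-th finite difference of $f$, so the natural tool is the inclusion--exclusion / discrete-derivative calculus familiar from the study of polynomial functions.

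For multilinearity, the key step is to show $f'$ is linear (in fact additive and homogeneous) in each coordinate, say the last one $Z_d$, with the other inputs held fixed. First I would split the sum over $\mathcal{S}\subseteq[d]$ according to whether $d\in\mathcal{S}$, pairing each $\mathcal{S}\not\ni d$ with $\mathcal{S}\cup\{d\}$. This rewrites $f'$ as a sum over $\mathcal{S}\subseteq[d-1]$ of terms of the form $(-1)^{|\mathcal{S}|}\bigl(f(W_\mathcal{S})-f(W_\mathcal{S}+Z_d)\bigr)$, where $W_\mathcal{S}=\sum_{j\in\mathcal{S}}Z_j$. Each bracketed term is a first finite difference of $f$ in the direction $Z_d$. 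Since $f$ is a polynomial of total degree $d$, the map $Z_d\mapsto f(W+Z_d)-f(W)$ drops the degree in $Z_d$ by looking only at the part of $f$ that genuinely depends on $Z_d$; iterating this finite-difference argument across all $d$ coordinates (or appealing to the general fact that the $d$-th mixed finite difference of a total-degree-$d$ polynomial is multilinear) yields that $f'$ is linear in each $Z_j$. Concretely, I would argue that taking the finite difference in one variable of a total-degree-$d$ polynomial produces a polynomial whose degree in that variable is reduced, and the full $d$-fold difference must annihilate every monomial except those of multidegree exactly $(1,1,\ldots,1)$ in the $d$ formal directions; this is where multilinearity comes from.

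For the non-vanishing claim, the clean approach is to test $f'$ on a single monomial of top total degree. Expand $f$ into monomials; since $\deg f=d$, there is a monomial $M$ of total degree exactly $d$, say $M(X)=c\prod_t X_{(t)}^{e_t}$ with $\sum_t e_t=d$. The operator $f\mapsto f'$ is linear in $f$, and I would show it annihilates every monomial of total degree $<d$ (again by the finite-difference degree-drop argument: applying $d$ independent differences to something of degree below $d$ gives zero). So it suffices to evaluate $f'$ on the top-degree part. Choosing the $Z_j$ to be scalar multiples of basis directions that ``spread out'' the $d$ factors of $M$ across the $d$ slots, the only surviving contribution to $f'(Z_1,\ldots,Z_d)$ from $M$ is a nonzero multiple of the monomial coefficient times a combinatorial factor. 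The point is that the full finite difference of $x^d$ (or of a product of $d$ linear forms) evaluated appropriately produces the constant $d!$ times the leading coefficient, up to sign. Hence $f'$ is the zero map only if $d!\cdot c=0$ in $\mathbb{F}$, i.e.\ only if $\mathrm{char}\,\mathbb{F}$ divides $d!$; the hypothesis that $\mathrm{char}\,\mathbb{F}$ is $0$ or exceeds $d$ guarantees $d!\neq 0$, so $f'\not\equiv 0$.

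I expect the main obstacle to be the bookkeeping in the non-vanishing step: because $f$ is a multivariate polynomial over an abstract vector space $\bV$ rather than a single-variable polynomial, I must be careful to choose the test inputs $Z_1,\ldots,Z_d$ so that a fixed top-degree monomial contributes and the combinatorial coefficient is exactly $d!$ (or a known nonzero multiple) rather than something that could accidentally vanish through cancellation among several top-degree monomials. The cleanest way around this is to reduce to the univariate case by restricting $f$ along a generic line or by picking the $Z_j$ to isolate one tensor slot at a time, so that the surviving term is unambiguously $d!$ times a nonzero leading coefficient; the characteristic hypothesis is then exactly what is needed to keep $d!$ invertible. The multilinearity step, by contrast, should be essentially formal once the finite-difference degree-drop lemma is in hand.
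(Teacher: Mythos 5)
Your proposal is correct and follows essentially the same route as the paper: both treat $f'$ as the $d$-fold finite difference (polarization) of $f$, reduce to monomials, observe that the inclusion--exclusion sum annihilates every monomial of total degree below $d$ and retains from each top-degree monomial only its multilinear symmetrization, and extract the factor $(-1)^d d!$ to conclude non-vanishing under the characteristic hypothesis. The paper carries out the combinatorial step in one shot via the coordinate-wise identity $\sum_{s\in\{0,1\}}(-1)^{s}s^{e}=-\mathbbm{1}(e\ge 1)$ rather than iterating one difference at a time, and it sidesteps the cancellation issue you flag in the non-vanishing step by simply evaluating on the diagonal $Z_1=\cdots=Z_d=Z$, where $f'(Z,\ldots,Z)=(-1)^d\,d!$ times the top homogeneous part of $f$ --- a non-zero polynomial by the definition of $\deg f$ --- so no choice of special test directions is needed.
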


Assuming the correctness of Lemma \ref{lemma:basic}, it 
suffices to prove that $f'$ enables computation designs that tolerates at least the same number of stragglers, and provides at least the same level of data privacy, compared to that of $f$. 
We prove this fact by constructing such computing schemes for $f'$ given any design for $f$.

Note that $f'$ is defined as a linear combination of functions $f(\sum_{j\in\mathcal{S}}Z_{j})$, each of which is a composition of a linear map and $f$. Given the linearity of the encoding design, any computation scheme of $f$ can be directly applied to any of these functions, achieving the same resiliency and privacy requirements. Since the decoding functions are linear, the same scheme also applies to linear combinations of them, which includes $f'$. 
Hence, the resiliency-privacy tradeoff achievable for $f$ can also be achieved by $f'$. This concludes the proof. 

\subsection{Proof of Lemma \ref{lemma:basic}} \label{pl:basic}

We first prove that 
$f'$ is multilinear with respect to the $d$ inputs.
Recall that by definition, $f$ is a linear combination of monomials, and $f'$ is constructed based on $f$ through a linear operation. By exploiting the commutativity of these these two linear relations, we only need to show individually that each monomial in $f$ is transformed into a multilinear function.

More specifically, let $f$ be the sum of monomials $h_k\triangleq U_k  \cdot \prod\limits_{\ell=1}^{d_k} h_{k,\ell}(\cdot)$ where $k$ belongs to a finite set, $U_k\in\bU$, $d_k\in\{0,1,...,d\}$, and each $h_{k,\ell}$ is a linear map from $\bV$ to $\bF$. Let $h'_k$ denotes the contribution of $h_k$ in $f'$, then for any $Z=(Z_1,...,Z_d)\in \bV^d$ we have
\begin{align}
    h'_k(Z)&=\sum_{\mathcal{S}\subseteq[d]}{(-1)^{|\mathcal{S}|}h_k\left(\sum_{j\in\mathcal{S}}Z_{j}\right)}\nonumber\\
    &=\sum_{\mathcal{S}\subseteq[d]}{(-1)^{|\mathcal{S}|}U_k  \cdot \prod\limits_{\ell=1}^{d_k} h_{k,\ell}\left(\sum_{j\in\mathcal{S}}Z_{j}\right)}.
\end{align}

By utilizing the linearity of each $h_{k,\ell}$, we can write $h'_k$ as
\begin{align}
    h'_k(Z)&=U_k  \cdot \sum_{\mathcal{S}\subseteq[d]}{(-1)^{|\mathcal{S}|}\prod\limits_{\ell=1}^{d_k} \sum_{j\in\mathcal{S}}h_{k,\ell}\left(Z_{j}\right)}\nonumber\\
    &=U_k  \cdot \sum_{\mathcal{S}\subseteq[d]}{(-1)^{|\mathcal{S}|} \prod\limits_{\ell=1}^{d_k} \sum_{j=1}^d\mathbbm{1}{(j\in\mathcal{S})}\cdot h_{k,\ell}\left(Z_{j}\right)}
\end{align}
Then by viewing each subset $\mathcal{S}$ of $[d]$ as a map from $[d]$ to $\{0,1\}$, we have\footnote{Here we define $0^0=1$.}
\begin{alignat}{2}
    h'_k(Z)&=U_k  \sum_{\boldsymbol{s}\in\{0,1\}^d}&&\left(\prod_{m=1}^d(-1)^{{s_m}}\right)\nonumber\\ &~&&\cdot\prod_{\ell=1}^{d_k} \sum_{j=1}^d s_j\cdot h_{k,\ell}\left(Z_{j}\right)\nonumber\\
    &=U_k  \sum_{\boldsymbol{j}\in[d]^{d_k}}&&\sum_{\boldsymbol{s}\in\{0,1\}^d}\left(\prod_{m=1}^d(-1)^{{s_m}}\right) \nonumber\\
    &~&&\phantom{\sum_{\boldsymbol{s}\in\{0,1\}^d}}\cdot  \prod_{\ell=1}^{d_k} (s_{j_\ell}\cdot h_{k,\ell}\left(Z_{j_\ell}\right)).
\end{alignat}
Note that the product $\prod\limits_{\ell=1}^{d_k} s_{j_\ell}$ can be alternatively written as $\prod\limits_{m=1}^d s_m^{\#(m~ \textup{in} ~\boldsymbol{j})}$, {where $\#(m~ \textup{in} ~\boldsymbol{j})$ denotes the number of elements in $\boldsymbol{j}$ that equals $m$. } Hence 
\begin{alignat}{3}\label{eq:conv_inter}
       h'_k (Z)&=U_k  \cdot \sum_{\boldsymbol{j}\in[d]^{d_k}}  &&\sum_{\boldsymbol{s}\in\{0,1\}^d}\left(\prod_{m=1}^d\left((-1)^{{s_m}}  s_m^{\#(m~ \textup{in} ~\boldsymbol{j})}\right)\right) \nonumber \\ &~&&\phantom{\sum_{\boldsymbol{s}\in\{0,1\}^d}}\cdot\prod_{\ell=1}^{d_k} h_{k,\ell}(Z_{j_\ell})\nonumber\\
      &=U_k \cdot \sum_{\boldsymbol{j}\in[d]^{d_k}}  &&\left(\prod_{m=1}^d \sum_{s\in\{0,1\}}(-1)^{s}  s^{\#(m~ \textup{in} ~\boldsymbol{j})}\right) \nonumber\\ &~&&\cdot\prod_{\ell=1}^{d_k} h_{k,\ell}(Z_{j_\ell}).
\end{alignat}
    The sum $\sum\limits_{s\in\{0,1\}}(-1)^{{s}}  s^{\#(m~ \textup{in} ~\boldsymbol{j})}$ is non-zero only if $m$ appears in $\boldsymbol{j}$. Consequently, among all terms that appear in (\ref{eq:conv_inter}), only the ones with degree $d_k=d$ and distinct elements in $\boldsymbol{j}$ have non-zero contribution. More specifically, \footnote{Here $S_d$ denotes the symmetric group of degree $d$.}
    \begin{align}
    h'_k(Z) &=\left(-1\right)^d \cdot \mathbbm{1}(d_k=d) \cdot U_k  \cdot{ \sum_{g\in S_d}    \prod\limits_{j=1}^{d} h_{k,g(j)}\left(Z_{j}\right)}.\label{eq:conv_finalh}
\end{align}
Recall that  $f'$ is a linear combination of $h'_k$'s. Consequently, it is a multilinear function. 

Now we prove that $f'$ is non-zero. From equation (\ref{eq:conv_finalh}), we can show that when all the elements $Z_{j}$'s are identical, $f'(Z)$ equals the evaluation of the highest degree terms of $f$ multiplied by a constant $(-1)^d d!$ with $Z_{j}$ as the input for any $j$. Given that the highest degree terms can not be zero, and $(-1)^d d!$ is non-zero as long as the characteristic of the field $\bF$ is greater than $d$, we proved that $f'$ is non-zero.


\subsection{Optimality in randomness}\label{app:pt_rand}

In this appendix, we prove the optimality of LCC in terms of the amount of randomness needed in data encoding, which is formally stated in the following theorem.
 \begin{theorem} \label{lemma:optimalRandomness}
 (Optimal randomness) Any linear encoding scheme that universally achieves a same tradeoff point 
  specified in Theorem \ref{thm:lcc}  for all linear functions~$f$ (i.e., $(S,A,T)$ such that $K+T+S+2A= N$) 
 must use an amount of randomness no less than that of LCC.
\end{theorem}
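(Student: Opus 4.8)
The plan is to work in the linear-algebraic / entropy picture and show that any $T$-private linear scheme must inject at least $T$ blocks of randomness, matching LCC. First I would fix notation: since the encoding is linear, write each worker's share as $\tilde X_i = \sum_j X_j G_{ji} + \sum_\ell Z_\ell R_{\ell i}$, collecting the data coefficients into $G \in \bF^{K \times N}$ and the random-key coefficients into $R \in \bF^{m \times N}$, where $Z_1,\ldots,Z_m$ are the injected i.i.d.\ uniform keys. The amount of randomness is $m$ (more precisely $\mathrm{rank}(R)$, after discarding linearly dependent keys), normalized so that each block has the size of one $X_i$; LCC attains $m=T$, so the goal is $\mathrm{rank}(R)\ge T$. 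Because the scheme is universal, I may instantiate $f$ as the identity map, so decodability means recovering all of $X=(X_1,\ldots,X_K)$.

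The backbone is a short \emph{minimum reconstruction size} fact obtained from privacy alone: for any set $\hat{\mathcal K}$ from which $X$ is decodable, picking $\cT\subseteq\hat{\mathcal K}$ with $|\cT|=T$ and chaining mutual information gives $H(X)=I(X;\tilde X_{\hat{\mathcal K}})=I(X;\tilde X_{\hat{\mathcal K}\setminus\cT}\mid\tilde X_\cT)\le H(\tilde X_{\hat{\mathcal K}\setminus\cT})\le(|\hat{\mathcal K}|-T)\,H(X_1)$, using $I(X;\tilde X_\cT)=0$ and that each share lies in $\bV$. Hence every reconstruction set has size at least $K+T$. I would combine this with the security-to-resiliency reduction of Appendix~\ref{pl:security} (Hamming distance $\ge S+2A+1$), which upgrades $S$-resilience and $A$-security to tolerance of $S+2A$ erasures, so that under $K+T+S+2A=N$ the master can reconstruct $X$ from \emph{any} $K+T$ workers. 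Fix one such set $\mathcal K$.

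Next I would analyze $\mathcal K$. Decodability of $X$ from $\tilde X_{\mathcal K}=XG_{\mathcal K}+ZR_{\mathcal K}$ despite the unknown nuisance $Z$ forces $\mathrm{rowspace}(G_{\mathcal K})\cap\mathrm{rowspace}(R_{\mathcal K})=\{0\}$ and $\mathrm{rank}(G_{\mathcal K})=K$; otherwise two inputs differing by an element of the intersection, compensated by the randomness, would produce identical observations. Consequently the stacked matrix $E_{\mathcal K}=\binom{G_{\mathcal K}}{R_{\mathcal K}}$ has rank exactly $K+\mathrm{rank}(R_{\mathcal K})$. The crux is to show $\mathrm{rank}(R_{\mathcal K})\ge T$: if instead $\mathrm{rank}(R_{\mathcal K})<T$, then $\mathrm{rank}(E_{\mathcal K})<K+T=|\mathcal K|$, so the columns of $E_{\mathcal K}$ are linearly dependent and some worker $i^\ast\in\mathcal K$ has its encoding column in the span of the others. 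Then $\tilde X_{i^\ast}$ is a fixed linear combination of $\{\tilde X_i\}_{i\in\mathcal K\setminus\{i^\ast\}}$, so $X$ is already decodable from the $K+T-1$ workers $\mathcal K\setminus\{i^\ast\}$, contradicting the minimum-reconstruction-size bound. Therefore the randomness satisfies $m\ge\mathrm{rank}(R)\ge\mathrm{rank}(R_{\mathcal K})\ge T$, which is the claim.

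I expect the main obstacle to be exactly this last step: ruling out that a scheme cleverly reuses fewer than $T$ correlated random symbols while still hiding the data from every $T$-subset. The delicate point is that ``recovery threshold $=K+T$'' is a worst-case notion (only \emph{some} $(K{+}T{-}1)$-subset need fail), so one cannot directly contradict it by exhibiting a single good small set; the resolution is to use the privacy-based \emph{universal} lower bound — every reconstruction set has size $\ge K+T$ — rather than the recovery-threshold statement of Lemma~\ref{lemma:rec}, against which a single redundant worker inside a known reconstruction set already yields a contradiction. A secondary technical chore is to justify treating shares and keys as blocks of equal size $\dim\bV$ and to confirm the argument is insensitive to whether $T\le K$ or $T>K$, which it is because nothing in the chain uses $\mathrm{rank}(G)$ beyond its value $K$ on the reconstruction set.
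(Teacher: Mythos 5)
Your proof is correct, but it takes a genuinely different route from the paper's. The paper proves this theorem in two lines by casting the problem as an $(n,k,r,z)_{\bF_q^t}$ \emph{secure RAID} scheme (with $n=N$, $k=K$, $z=T$, $t=\dim\bV$, and $r=S+2A$ erasures) and citing an external theorem (Huang's thesis, Thm.~3.2.1) stating that any linear rate-optimal such scheme needs at least $zt$ keys; your argument is self-contained. The three ingredients you assemble --- the entropy chain $H(X)=I(X;\tilde X_{\hat{\mathcal K}\setminus\cT}\mid\tilde X_\cT)\le(|\hat{\mathcal K}|-T)H(X_1)$ showing every reconstruction set has size at least $K+T$, the Hamming-distance reduction from $(S,A)$-tolerance to $S+2A$ erasures so that some set $\mathcal{K}$ of size exactly $K+T$ is a reconstruction set, and the rank identity $\mathrm{rank}\bigl(\begin{smallmatrix}G_{\mathcal K}\\ R_{\mathcal K}\end{smallmatrix}\bigr)=K+\mathrm{rank}(R_{\mathcal K})$ forcing a redundant worker whenever $\mathrm{rank}(R_{\mathcal K})<T$ --- all check out, and your observation that the contradiction must be drawn against the \emph{universal} minimum-reconstruction-size bound rather than against the worst-case recovery threshold is exactly the right subtlety. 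What your version buys is transparency and independence from the external reference; what it gives up is a small amount of generality: you model the injected randomness as blocks $Z_\ell\in\bV$ entering with scalar coefficients, which matches LCC and the paper's normalization (``randomness normalized by the length of $X_i$''), but a fully general linear scheme could inject scalar keys with vector coefficients, in which case your column-dependence step deletes a single coordinate rather than a whole worker and the argument needs the finer per-coordinate accounting that the cited secure-RAID theorem carries out. Since you flag this explicitly and it does not affect the model the theorem is stated for, I regard it as a technical footnote rather than a gap.
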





\begin{proof}

The proof is taken almost verbatim from~\cite{WentaosThesis}, Chapter~3. In what follows, an~$(n,k,r,z)_{\bF_q^t}$ \textit{secure RAID scheme} is a storage scheme over~$\bF_q^t$ (where~$\bF_q$ is a field with~$q$ elements) in which~$k$ message symbols are coded into~$n$ storage servers, such that the~$k$ message symbols are reconstructible from any~$n-r$ servers, and any~$z$ servers are information theoretically oblivious to the message symbols. Further, such a scheme is assumed to use~$v$ random entries as keys, and by~\cite{WentaosThesis}, Proposition~3.1.1, must satisfy~$n-r\ge k+z$.

\begin{theorem}\cite{WentaosThesis}, Theorem~3.2.1. \label{theorem:Wentao}
A linear rate-optimal~$(n,k,r,z)_{\bF_q^t}$ secure RAID scheme uses at least~$zt$ keys over~$\bF_q$ (i.e.,~$v\ge z$).
\end{theorem}

Clearly, in our scenario~$\bV$ can be seen as~$\bF_q^{\dim\bV}$ for some~$q$. Further, by setting~$N=n$, $T=z$, and~$t=\dim\bV$, it follows from Theorem~\ref{theorem:Wentao}
that any encoding scheme which guarantees information theoretic privacy against sets of~$T$ colluding workers must use at least~$T$ random entries~$\{Z_i\}_{i\in[T]}$.
\end{proof}


\subsection{Optimality of LCC for Linear Regression }\label{sec:regression_converse}
In this section, we prove that the proposed LCC scheme achieves the minimum possible recovery threshold $R^*$ to within a factor of 2, for the linear regression problem discussed in Section~6.

As the first step, we prove a lower bound on $R^*$ for linear regression. More specifically, we show that for any coded computation scheme, the master always needs to wait for at least $\lceil\frac{n}{r}\rceil$ workers to be able to decode the final result, i.e., $R^* \geq \lceil\frac{n}{r}\rceil$. Before starting the proof, we first note that since here we consider a more general scenario where workers can compute \emph{any} function on locally stored coded sub-matrices (not necessarily matrix-matrix multiplication), the converse result in Theorem~\ref{thm:opt} no longer holds. 


To prove the lower bound, it is equivalent to show that, for any coded computation scheme and any subset $\mathcal{N}$ of workers, if the master can recover $\vct{X}^\top\vct{X}\vct{w}$ given the results from workers in $\mathcal{N}$, then we must have $|\mathcal{N}|\geq \lceil\frac{n}{r}\rceil$.
Suppose the condition in the above statement holds, then we can find encoding, computation, and decoding functions such that for any possible values of $\vct{X}$ and $\vct{w}$, the composition of these functions returns the correct output.


Note that within a GD iteration, 
each worker performs its local computation only based on its locally stored coded sub-matrices and the weight vector $\vct{w}$. Hence, if the master can decode the final output from the results of the workers in a subset $\mathcal{N}$, then the composition of the decoding function and the computation functions of these workers essentially computes $\vct{X}^\top\vct{X}\vct{w}$, using only the coded sub-matrices stored at these workers and the vector $\vct{w}$. Hence, if any class of input values $\vct{X}$ gives the same coded sub-matrices for each worker in $\mathcal{N}$, then the product $\vct{X}^\top\vct{X}\vct{w}$ must also be the same given any $\vct{w}$. 


Now we consider the class of input matrices $\vct{X}$ such that all coded sub-matrices stored at workers in $\mathcal{N}$ equal the values of the corresponding coded sub-matrices when $\vct{X}$ is zero. 
Since $\vct{0}^\top\vct{0}\vct{w}$ is zero for any $\vct{w}$, 
$\vct{X}^\top\vct{X}\vct{w}$ must also be zero for all matrices $\vct{X}$ in this class and any $\vct{w}$. However, for real matrices $\vct{X}=\vct{0}$ is the only solution to that condition. Thus, zero matrix must be the only input matrix that belongs to this class.


Recall that all the encoding functions are assumed to be linear. We consider the collection of all encoding functions that are used by workers in $\mathcal{N}$, which is also a linear map. As we have just proved, the kernel of this linear map is $\{\vct{0}\}$. Hence, its rank must be at least the dimension of the input matrix, which is $dm$.
On the other hand, its rank is upper bounded by the dimension of the output, where each encoding function from a worker contributes at most 
$\frac{rdm}{n}$. Consequently, the number of workers in $\mathcal{N}$ must be at least $\lceil\frac{n}{r}\rceil$ to provide sufficient rank to support the computation. 

Having proved that $R^* \geq \lceil\frac{n}{r}\rceil$, the factor of two characterization of LCC directly follows since  $R^* \leq R_{\textup{LCC}} = 2\lceil \tfrac{n}{r}\rceil -1<2\lceil\frac{n}{r}\rceil \leq 2R^*$.

Note that the converse bound proved above applies to the most general computation model, i.e., there are no assumptions made on the encoding functions or the functions that each worker computes. If additional requirements are taken into account, we can show that LCC achieves the exact optimum recovery threshold (e.g., see \cite{yu2018straggler}).   

\subsection{Complete Experimental Results}\label{sec:experiments}
In this section, we present the complete experimental results using the LCC scheme proposed in the paper, the gradient coding (GC) scheme~\cite{TandonLDK17} (the cyclic repetition scheme), the matrix-vector multiplication based (MVM) scheme~\cite{lee2015speeding}, and the uncoded scheme for which there is no data redundancy across workers, measured from running linear regression on Amazon EC2 clusters.

In particular, experiments are performed for the following 3 scenarios.
\begin{itemize}[leftmargin=*]
    \item Scenario 1 \& 2: \# of input data point $m = 8000$, \# of features $d = 7000$. 
    \item Scenario 3: \# of input data point $m = 160000$, \# of features $d = 500$.
\end{itemize}

In scenarios 2 and 3, we artificially introduce stragglers by imposing a $0.5$ seconds delay on each worker with probability $5\%$ in each iteration.



We list the detailed breakdowns of the run-times in 3 experiment scenarios in Tables~\ref{table:scenario one_supp}, \ref{table:scenario two_supp}, and \ref{table:scenario three_supp} respectively. In particular, the computation (comp.) time is measured as the summation of the maximum local processing time among all non-straggling workers, over 100 iterations. The communication (comm.) time is computed as the difference between the total run-time and the computation time.


\begin{table}[!htbp]
 \caption{Breakdowns of the run-times in scenario one.}
\label{table:scenario one_supp}
  \centering
  \scalebox{0.72}{
  \begin{tabular}{|c | c | c| c | c | c |}
    \hline
     \multirow{2}{*}{schemes} & \# batches/ & recovery & comm.  &comp.  & total \\
     & worker ($r$) & threshold & time & time& run-time \\
     \hline
    uncoded & 1 & 40   &24.125 s &0.237 s &  24.362 s\\ \hline
    GC & 10 &31&6.033 s  & 2.431 s  &8.464 s\\ \hline
    MVM Rd. 1 & 5 & 8&1.245 s & 0.561 s&  1.806 s\\ \hline
    MVM Rd. 2 & 5 & 8&1.340 s & 0.480 s&  1.820 s\\ \hline
    MVM total & 10 & -&2.585 s & 1.041 s&  3.626 s\\ \hline
    LCC & 10 & 7&1.719 s & 1.868 s&  3.587 s\\ \hline
  \end{tabular}}
\end{table}

\begin{table}[!htbp]
 \caption{Breakdowns of the run-times in scenario two.}
\label{table:scenario two_supp}
  \centering
  \scalebox{0.72}{
  \begin{tabular}{|c | c | c| c | c | c |}
    \hline
     \multirow{2}{*}{schemes} & \# batches/ & recovery & comm.  &comp.  & total \\
     & worker ($r$) & threshold & time & time& run-time\\
     \hline
    uncoded & 1 & 40 & 7.928 s & 44.772 s & 52.700 s\\ \hline
    GC & 10 & 31 & 14.42 s  & 2.401 s  & 16.821 s\\ \hline
    MVM Rd. 1 & 5 & 8 & 2.254 s & 0.475 s &  2.729 s\\ \hline
    MVM Rd. 2 & 5 & 8 & 2.292 s & 0.586 s &  2.878 s\\ \hline
    MVM total & 10 & - & 4.546 s & 1.061 s &  5.607 s\\ \hline
    LCC & 10 & 7 & 2.019 s & 1.906 s &  3.925 s\\ \hline
  \end{tabular}}
\end{table}

\begin{table}[!htbp]
\caption{Breakdowns of the run-times in scenario three.}
\label{table:scenario three_supp}
  \centering
\scalebox{0.71}{
  \begin{tabular}{|c | c | c| c | c | c |}
    \hline
     \multirow{2}{*}{schemes} & \# batches/ & recovery & comm.  &comp.  & total \\
     &worker ($r$) & threshold & time & time& run-time\\
     \hline
    uncoded & 1 & 40 & 0.229 s & 41.765 s & 41.994 s\\ \hline
    GC & 10 & 31 & 8.627 s & 2.962 s  & 11.589 s\\ \hline
    MVM Rd. 1 & 5 & 8 & 3.807 s & 0.664 s&  4.471 s\\ \hline
    MVM Rd. 2 & 5 & 8 & 52.232 s & 0.754 s&  52.986 s\\ \hline
    MVM total & 10 & - & 56.039 s & 1.418 s&  57.457 s\\ \hline
    LCC & 10 & 7 & 1.962 s & 2.597 s&  4.541 s\\ \hline
  \end{tabular}}
\end{table}

\end{document}